
\documentclass{article}
\pdfpagewidth=8.5in
\pdfpageheight=11in

\usepackage{ijcai23}

\usepackage{times}
\usepackage{soul}
\usepackage{url}
\usepackage[hidelinks]{hyperref}
\usepackage[utf8]{inputenc}
\usepackage[small]{caption}
\usepackage{graphicx}
\usepackage{amsmath}
\usepackage{amsthm}
\usepackage{booktabs}
\usepackage{algorithm}
\usepackage{algorithmic}
\usepackage[switch]{lineno}
\usepackage{balance} 
\usepackage{graphicx}
\usepackage{times}

\usepackage{soul}
\usepackage{url}
\usepackage[hidelinks]{hyperref}
\usepackage[utf8]{inputenc}
\usepackage{caption}
\usepackage{amsmath}
\usepackage{appendix}
\usepackage{booktabs}
\urlstyle{same}
\usepackage{array}

\usepackage{amsfonts}
\usepackage{amsmath}
\usepackage{framed}
\usepackage{subfig}

\usepackage{algorithm}
\usepackage{algorithmic}
\usepackage{float}  
\usepackage{lipsum}
\makeatletter
\newenvironment{breakablealgorithm}
{
	\begin{center}
		\refstepcounter{algorithm}
		\hrule height.8pt depth0pt \kern2pt
		\renewcommand{\caption}[2][\relax]{
			{\raggedright\textbf{\ALG@name~\thealgorithm} ##2\par}%
			\ifx\relax##1\relax 
			\addcontentsline{loa}{algorithm}{\protect\numberline{\thealgorithm}##2}%
			\else 
			\addcontentsline{loa}{algorithm}{\protect\numberline{\thealgorithm}##1}%
			\fi
			\kern2pt\hrule\kern2pt
		}
	}{
		\kern2pt\hrule\relax
	\end{center}
}
\makeatother


\urlstyle{same}



\newtheorem{theorem}{Theorem}
\newtheorem{definition}{Definition}
\newtheorem{lemma}{Lemma}
\newtheorem{corollary}{Corollary}
\newtheorem{proposition}{Proposition}
\newtheorem{observation}{Observation}





\title{Diffusion Multi-unit Auctions with Diminishing Marginal Utility Buyers}


\author{
Haolin Liu$^{1*}$
\and
Xinyuan Lian$^{2}$\footnote{Equal Contribution} \And
Dengji Zhao$^{2}$
\affiliations
$^1$University of Virginia\\
$^2$ShanghaiTech University
\emails
srs8rh@virginia.edu,
\{lianxy, zhaodj\}@shanghaitech.edu.cn
}

\begin{document}

\maketitle


\begin{abstract}
We consider an auction design problem where a seller sells multiple homogeneous items to a set of connected buyers. Each buyer only knows the buyers she directly connects with and has a diminishing marginal utility valuation for the items. The seller initially only connects to some buyers who can be directly invited to the sale by the seller. Our goal is to design an auction to incentivize the buyers who are aware of the auction to further invite their neighbors to join the auction. This is challenging because the buyers are competing for the items and they would not invite each other by default. Thus, rewards need to be given to buyers who diffuse information, but the rewards should be carefully designed to guarantee both invitation incentives and the seller's revenue. Solutions have been proposed recently for the settings where each buyer requires at most one unit and demonstrated the difficulties of the design. We move this forward to propose the very first diffusion auction for the multi-unit demand settings to improve both the social welfare and the seller’s revenue.
\end{abstract}



\section{Introduction}
Multi-unit auctions refer to auctions where multiple homogeneous items are available for sale in a single auction. They are commonly used in many real-world markets to allocate scarce resources such as emissions permits, electricity and spectrum licenses~\cite{hortaccsu2008understanding,huang2006auction}. A common feature of these markets is that the participants are known in advance, and the seller can hold the classic Vickrey-Clarke-Groves (VCG) mechanism to get a good outcome~\cite{vickrey1961counterspeculation,clarke1971multipart,groves1973incentives}. In order to further improve social welfare and revenue, an intuitive method is to advertise the sale to involve more buyers. \citeauthor{bulow1994auctions}~\shortcite{bulow1994auctions} showed that 
the VCG mechanism among $n+1$ buyers for selling one item will give more revenue in expectation than Myerson's optimal auction among $n$ buyers~\cite{myerson1981optimal}. Thus it is worthwhile for the seller to expand the market through advertisements.
However, advertisement does not come for free and if the advertisements can't attract enough valuable buyers, the seller's total utility may decrease.

In order to attract more buyers without sacrificing the seller's utility, diffusion mechanism design has been proposed in recent years~\cite{zhao2021mechanism,li2022diffusion}. Diffusion mechanisms utilize the participants' connections to attract more buyers. This is done by incentivizing each participant who is aware of the market to invite all her neighbors to join the market. This is challenging because the participants are competing for the same resources. Therefore, proper rewards are designed for buyers who invite valuable buyers, but at the same time, we need to guarantee the seller is incentivized to use the mechanism. 

The very first diffusion mechanism, called the information diffusion mechanism (IDM), was proposed in \citeauthor{li2017mechanism}~\shortcite{li2017mechanism}, where a simple case of a seller selling one item via a social network is studied. 
Following this work, two studies tried to extend the model to more general settings. \citeauthor{zhao2019selling}~\shortcite{zhao2019selling} first tried to generalize the setting from single-unit-supply to multi-unit-supply case and proposed a mechanism called generalized IDM (GIDM). However, an error in one key proof was identified later~\cite{takanashi2019efficiency}. Then, another mechanism, called DNA-MU, was proposed for the same setting~\cite{kawasaki2020strategy}. Surprisingly, the new mechanism is also problematic (we will show a counter-example later). Therefore, this problem is still open even after a few hard tries. This also indicates that generalizing IDM to more general settings is very difficult (see~\cite{zhao22ijcai} for a detailed discussion).


In this paper, we keep working on this open question. We further extend the setting to a multi-unit-supply and multi-unit-demand case, which is more challenging (Section $3$). That is, the seller sells multiple units and each buyer can buy multiple units with a diminishing marginal utility function (i.e., the valuation for receiving more units is non-increasing). This setting was briefly touched by \citeauthor{takanashi2019efficiency}~\shortcite{takanashi2019efficiency}. However, their solution restricts that each winner can only get $n_k$ units where $n_k$ is predefined, and the remaining units are discarded.  
Such a design degenerates multi-unit demand into unit demand. Also, their solution may give the seller negative revenue. Here, we propose a new solution without restricting the allocation and making sure the seller's revenue is improved.

Except for extending the studies to more general settings, single-unit diffusion auctions have also been extensively studied. Following IDM, a general class of single-item diffusion auctions was proposed by \citeauthor{li2019diffusion}~\shortcite{li2019diffusion}. To make the reward distribution fairer, a fair diffusion mechanism was proposed~\cite{ZhangZC20}. For the circumstance where the seller's aim is not for profit, a redistribution diffusion mechanism was studied~\cite{zhang2019incentivize}. That is, there are many variations for the single-unit setting and then \citeauthor{li2020incentive}~\shortcite{li2020incentive} further characterized the conditions to achieve the invitation incentive. 
Finally, similar invitation incentives are studied in other settings such as matching and cooperative games~\cite{zhao2021mechanism}.

\section{The Model}
We consider a setting where a seller $s$ sells $\mathcal{K}\ge 1$ homogeneous items via a network. In addition to the seller, the social network consists of $n$ potential buyers denoted by $N = \{1, \cdots, n\}$.  Each $i \in N$ has a private marginal decreasing utility function for the $\mathcal{K}$ items which is denoted by a value vector $v_i = (v_i^1, \cdots, v_i^{\mathcal{K}})$ where $v_i^1 \ge v_i^2 \ge \cdots \ge v_i^{\mathcal{K}} \ge 0$. Then $i$'s valuation for receiving $m \ge 1$ units is denoted by $v_i(m) = \sum_{k = 1}^{m} v_i^{k}$, and the valuation for receiving nothing is $v_i(0) = 0$. Each buyer $i\in N$ has a set of neighbors denoted by $r_i \subseteq N \cup \{s\}$ and she does not know the existence of the others except for $r_i$. Thus, the seller is also only aware of her neighbors initially. Let the neighbors of the seller be $r_s$.

Let $\theta_i = (v_i, r_i)$ be the \emph{type} of buyer $i\in N$ and $\theta = (\theta_1,\cdots,\theta_n)$ be the type profile of all buyers. $\theta$ can also be written as $(\theta_i, \theta_{-i})$ where $\theta_{-i}$ is the type profile of all buyers except for $i$. Let $\Theta_i$ be the type space of buyer $i$ and $\Theta$ be the type profile space of all buyers. Since the seller initially only connects to a few buyers, we want to design auction mechanisms that ask each buyer to report her valuation on the items and invite her neighbors to join the mechanism. This is mathematically modeled by reporting her type. Let $\hat{\theta}_i = (\hat{v}_i, \hat{r}_i)$ be buyer $i$'s type report where $\hat{r}_i \subseteq r_i$ because $i$ can not invite someone she does not know. Let $\hat{\theta} = (\hat{\theta}_1,\cdots, \hat{\theta}_n)$ be the report profile of all buyers. 

A general auction mechanism consists of an \emph{allocation policy} $\pi = (\pi_i)_{i\in N}$ and a \emph{payment policy} $p = (p_i)_{i\in N}$. Given a report profile $\hat{\theta}$, $\pi_i(\hat{\theta})\in \{0, 1, \cdots, \mathcal{K}\}$ is the number of items $i$ receives and $\sum_{i \in N } \pi_i(\hat{\theta}) \le \mathcal{K}$. $p_i(\hat{\theta}) \in \mathbb{R}$ is the payment that $i$ pays to the mechanism. If $p_i(\hat{\theta}) < 0$, then $i$ receives $|p_i(\hat{\theta})|$ from the mechanism. 

Since we assume that each participant is only aware of her neighbors,  initially only the seller's neighbors are invited to join the auction. Other buyers who are not properly invited by early joined buyers cannot join the auction, i.e., their reports cannot be used by the mechanism. Therefore, we have some additional constraints for the mechanism, which will be defined as the diffusion auction mechanism. 

\begin{definition}
	Given a report profile $\hat{\theta}$, an \emph{invitation chain} from the seller $s$ to buyer $i$ is a sequence of $(s,j_1,\cdots, j_l,j_{l+1}, \cdots, j_m,i)$ such that $j_1\in r_s$ and for all $1<l \leq m$, $j_l \in \hat{r}_{j_{l-1}}$, $i\in \hat{r}_{j_m}$ and no buyer appears twice in the sequence. If there is an invitation chain from the seller $s$ to buyer $i$, then we say buyer $i$ is \emph{valid} in the auction. Let $Q(\hat{\theta})$ be the set of all valid buyers under $\hat{\theta}$.
\end{definition}
Let $d_i(\hat{\theta})$ be the shortest length of all the invitation chains from seller $s$ to $i$ for each buyer $i \in Q(\hat{\theta})$. We denote $\mathcal{L}_d(\hat{\theta})$ the set of valid buyers whose shortest length of invitation chains is $d$, i.e., $\mathcal{L}_d(\hat{\theta}) = \{i | i\in Q(\hat{\theta}), d_i(\hat{\theta}) = d \}$. We also call $\mathcal{L}_d(\hat{\theta})$ layer $d$. Let $\mathcal{L}_{< l}(\hat{\theta}) = \bigcup_{1 \le i < l} \mathcal{L}_{i}(\hat{\theta})$ and $\mathcal{L}_{> l}(\hat{\theta}) = \bigcup_{i > l} \mathcal{L}_{i}(\hat{\theta})$

\begin{definition}
	A diffusion auction mechanism $(\pi,p)$ is an auction mechanism, where for all $\hat{\theta}$:
	\begin{itemize}
	\item for all invalid buyers $i \notin Q(\hat{\theta})$, $\pi_i(\hat{\theta}) = 0$ and $p_i(\hat{\theta}) = 0$.
	\item for all valid buyers $i \in Q(\hat{\theta})$, $\pi_i(\hat{\theta})$ and $p_i(\hat{\theta})$ are independent of the reports of buyers not in $Q(\hat{\theta})$.
	\end{itemize}
\end{definition}

Given a buyer $i$ of type $\theta_i = (v_i, r_i)$ and a report profile $\hat{\theta}$, the \emph{utility} of $i$ under a diffusion auction mechanism $(\pi, p)$ is defined as $u_i(\theta_i, \hat{\theta}, (\pi, p)) = v_i(\pi_i(\hat{\theta}))  - p_i(\hat{\theta})$.
For simplicity, we will use $u_i(\hat{\theta})$ to represent $u_i(\theta_i, \hat{\theta}, (\pi, p))$ when the mechanism is clear.

We say a diffusion auction mechanism is individually rational (IR) if, for each buyer, her utility is non-negative when she truthfully reports her valuation, no matter how many neighbors she invites and what the others do. It means that buyers' invitation behaviour will not make their utility negative as long as she reports their valuation truthfully. That is, we do not force buyers to invite all their neighbors to be IR.

\begin{definition}
	A diffusion auction mechanism $(\pi,p)$ is \emph{individually rational} (IR) if $u_i((v_i, \hat{r}_i), \hat{\theta}_{-i}) \geq 0$ for all $i \in N$, all $\hat{r}_i \subseteq r_i$, and all $\hat{\theta}_{-i}$.
\end{definition}

We say a diffusion auction mechanism is incentive compatible (IC) if for each buyer, truthfully reporting her valuation and inviting all her neighbors (i.e. reporting type truthfully) is a dominant strategy.

\begin{definition}
	A diffusion auction mechanism $(\pi,p)$ is \emph{incentive compatible} (IC) if $u_i(\theta_i, \hat{\theta}_{-i}) \geq u_i(\hat{\theta}_i, \hat{\theta}_{-i})$ for all $i\in N$, all $\hat{\theta}_i$ and all $\hat{\theta}_{-i}$.
\end{definition}

We say an auction is non-wasteful if all items can be allocated at the end of the auction. Non-wastefulness makes sure no item is discarded in auctions, which is important for social welfare. A similar definition is also given in \citeauthor{kawasaki2020strategy}~\shortcite{kawasaki2020strategy}.
\begin{definition}
    Given $N \neq \emptyset$, an auction mechanism is non-wasteful if $\sum_{i \in N}\pi_i(\hat{\theta}) = \mathcal{K}$ for any $\hat{\theta}$.
\end{definition}

\section{The Existing Mechanisms}
\label{faliure}
In this section, we will briefly discuss the techniques of the existing mechanisms and compare them with our design.


\subsection{The Failure of DNA-MU}
As we mentioned in the introduction, for the multi-unit-supply and single-unit-demand setting, two mechanisms have been proposed, GIDM~\cite{zhao2019selling} and DNA-MU~\cite{kawasaki2020strategy}. The problem of GIDM was identified early~\cite{takanashi2019efficiency}. Here, we show that DNA-MU also fails in the same counter-example.  

The description of DNA-MU on a tree is given in Algorithm \ref{DNA}, which is not hard to follow. The mechanism simply allocates the units to the buyers closer to the seller until no more units are left. The allocation for each chosen buyer is done by removing all the buyer's descendants (subtree) and checking whether the buyer's valuation is among the top $\mathcal{K}'$ highest of the remaining buyers ($\mathcal{K}'$ is the number of remaining units). Some technical terms: given a report profile $\hat{\theta}$, a subset $S \subseteq Q(\hat{\theta})$, and an integer $\mathcal{K}' \leq \mathcal{K}$, let $v^* (S, \mathcal{K}')$ denote the $\mathcal{K}'$-th highest value in buyer set $S$. 
If $|S| < \mathcal{K}'$, then $v^* (S, \mathcal{K}')=0$. 
Let $l^{max}$ be the number of layers of the tree network and $E_i(\hat{\theta})$ be the set of descendants of buyer $i$. 

\renewcommand{\algorithmicrequire}{\textbf{Input:}}
\renewcommand{\algorithmicensure}{\textbf{Output:}}

\begin{breakablealgorithm}
	\caption{DNA-MU on a Tree}
	\begin{algorithmic}[1]
		\REQUIRE A report profile $\hat{\theta}$;
		\ENSURE  $\pi(\hat{\theta})$ and $p(\hat{\theta})$;
		
		\STATE Construct the network according to $\hat{\theta}$;
		\STATE Initialize $\mathcal{K}' = \mathcal{K}$ and $W=\emptyset$;
		\FOR{$l = 1, 2 , \cdots, l^{max}$}
		    \FOR {$i \in \mathcal{L}_l (\hat{\theta})$ with random order} 
		        \STATE $p^*_i=v^*(Q(\hat{\theta})\setminus (E_i(\hat{\theta}) \cup W \cup \{i\}), \mathcal{K}')$ which is the price for buyer $i$;
        		\IF{$\hat{v}_i (\hat{\theta}) \geq p^*_i$}
        		\STATE $\pi_i(\hat{\theta})=1$, $p_i(\hat{\theta}) = p_i^*$, $\mathcal{K}' = \mathcal{K}' -1$, $W=W\cup \{i\}$;
        		\ELSE
        		\STATE $\pi_i(\hat{\theta})=0$, $p_i(\hat{\theta}) = 0$;
        		\ENDIF
    		\ENDFOR
		\ENDFOR
		\STATE Return $\pi_i(\hat{\theta})$ and $p_i(\hat{\theta})$ for each buyer $i$.
	\end{algorithmic}
	\label{DNA}
\end{breakablealgorithm}

The counter-example of the above mechanism is shown in Figure 1. We can run Algorithm \ref{DNA} on the network with $4$ units. If buyer B invites F, she cannot win the item and her utility is $0$ (the winners are \{B, C, D, G\}). However, if E does not invite F, she will get one unit with utility $1$ (the winners are \{A, B, C, E\}). The error in their proof comes from their Lemma 1~\cite{kawasaki2020strategy}. In this lemma, they claim that if buyer $i$ invites fewer neighbors, 
buyer $j$ before $i$ who is not the ancestor of $i$ may have a lower price to receive a unit, which decreases the chance of $i$ winning a unit. 
However, as we show in the example, this claim does not hold. In the example, when E stops inviting F, buyer A indeed gets one item with a lower price. However, the competition for the remaining buyers also increases. The winning prices for both E's ancestor D and non-ancestor G increase. Eventually, both D and G lose their units, and E receives one unit.

\begin{figure}[htbp]  
    \centering    
    \includegraphics[scale=0.35]{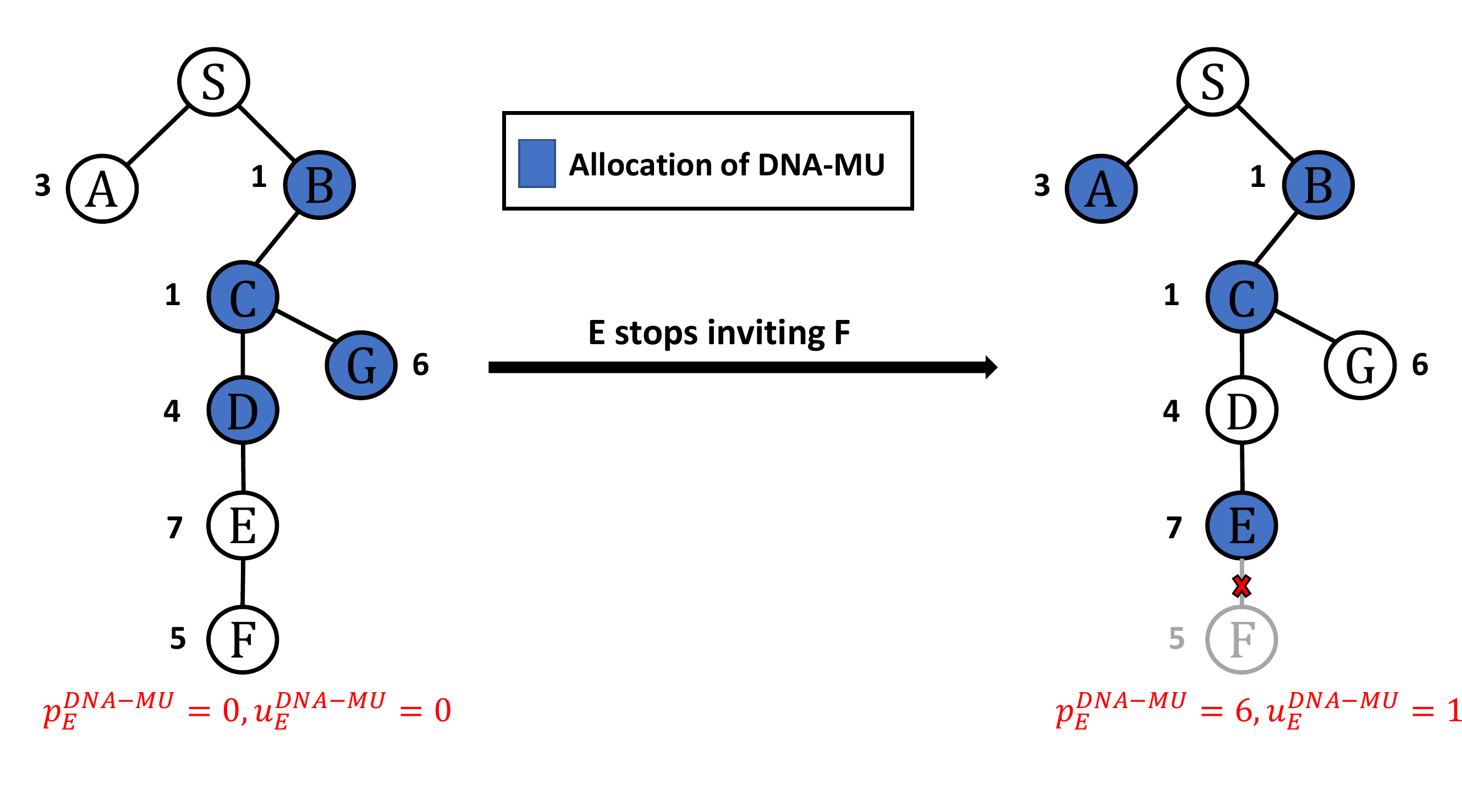}   
    \caption{A counter-example for DNA-MU}
    \label{figCompare}
\end{figure}

Interestingly, this counter-example is exactly the same counter-example for GIDM. Both GIDM and DNA-MU try to prove that a buyer $i$ can have a higher chance to get one item or gain a positive utility by inviting more valuable neighbors. However, their designs overlooked the impact of the buyers who are closer to the seller than $i$, and $i$ can create extra competition among them.

\subsection{The Existing Mechanisms v.s. Our Mechanism}
Although the two earlier extensions failed in some corner cases, their methodologies are very valuable. We will discuss and compare their methodologies with ours here. 

\subsubsection{Resale Incentives}
Both IDM and GIDM used the idea of resale to incentivize buyers to invite their neighbors~\cite{li2017mechanism,zhao2019selling}. The marginal contribution/utility of a buyer $i$ is computed roughly as follows:
\begin{itemize}
    \item Assume there are $k$ units finally allocated to buyer $i$ and her descendants (who will be disconnected from the seller without $i$).
    \item If $i$ hide all her descendants, she needs to pay $p_k$ to win the $k$ units alone.
    \item If $i$ resells the $k$ units among her descendants only with the same mechanism, she will receive $p_k'$.
    \item Then IDM/GIDM gives marginal utility $p_k' - p_k$ to $i$.
\end{itemize}

Another way to interpret the marginal utility of buyer $i$ in IDM/GIDM is the social welfare increase due to the participation of $i$'s descendants. IDM/GIDM gives all the marginal social welfare increase to $i$, so she is not afraid of inviting all her neighbors to join the mechanism.

\subsubsection{Winning Incentives}
The incentive behind DNA-MU (Algorithm $1$) is that a buyer can get a higher chance to win one unit by inviting more valuable neighbors~\cite{kawasaki2020strategy}. There is no resale benefit, i.e., the winning of a buyer's descendants will not give any reward to the buyer. Instead, large value descendants of a buyer $i$ will increase the winning prices of the non-ancestor buyers before $i$, and therefore, more units will be left and  $i$ may get a chance to receive one unit.

\subsubsection{Our Incentives}
Our mechanism is called the Layer-based Diffusion Mechanism (LDM) which is defined in Algorithm $2$. The incentive is a kind of combination of resale and winning. LDM prioritises buyers like DNA-MU according to their distances from the seller. It also utilizes the resale marginal utility to pay the buyers. The key differences are:
\begin{itemize}
    \item LDM not only solves the open incentive problem for unit demand but also works for multi-unit demand.
    \item DNA-MU determines the allocation for one buyer only at each step, while LDM decides the allocation for a layer in one step.
    \item In LDM, to decide the allocation of a layer, it has to remove all the potential competitors after the layer, because they have the incentive to manipulate this layer's allocation. This is also a kind of limitation of the mechanism. There might be another way to remove fewer buyers to determine their allocations, which will potentially increase social welfare and revenue.
\end{itemize}

\begin{figure}[t]  
    \centering    
    
    \subfloat[Unit-demand without D\label{figCompare1}] 
    {
        \centering         
        \includegraphics[scale=0.4]{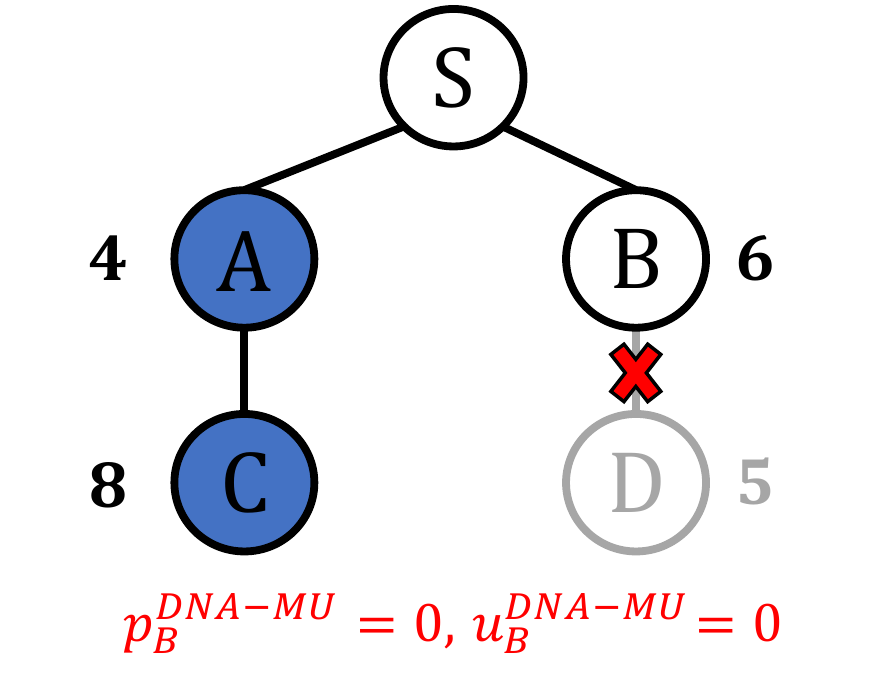}   
    }\hspace{3mm}
    \subfloat[Unit-demand with D\label{figCompare2}] 
    {
        \centering    
        \includegraphics[scale=0.4]{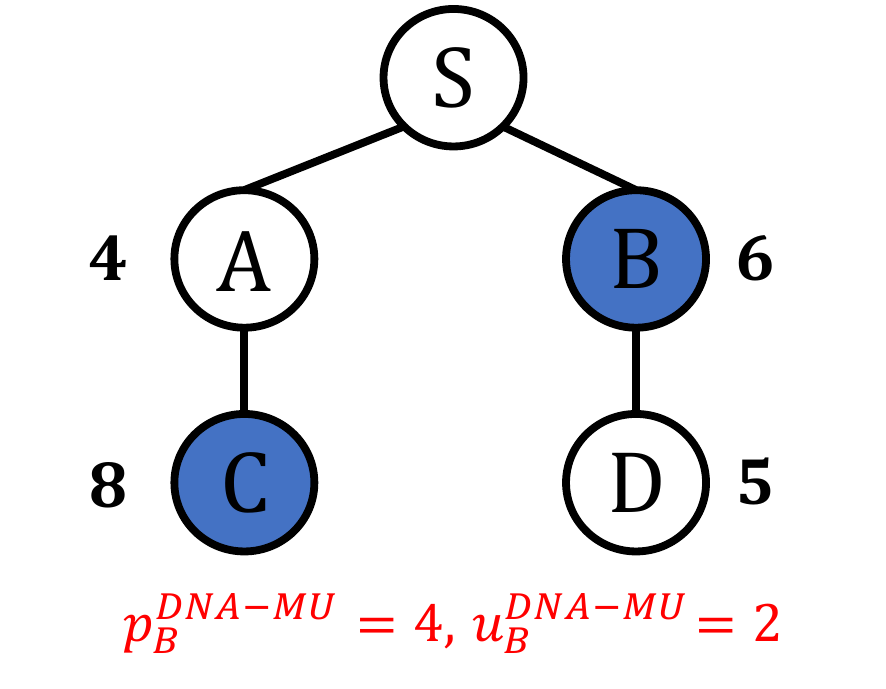}    
    }\hspace{3mm}
    \subfloat[Multi-unit-demand without D\label{figCompare3}] 
    {
        \centering    
        \includegraphics[scale=0.4]{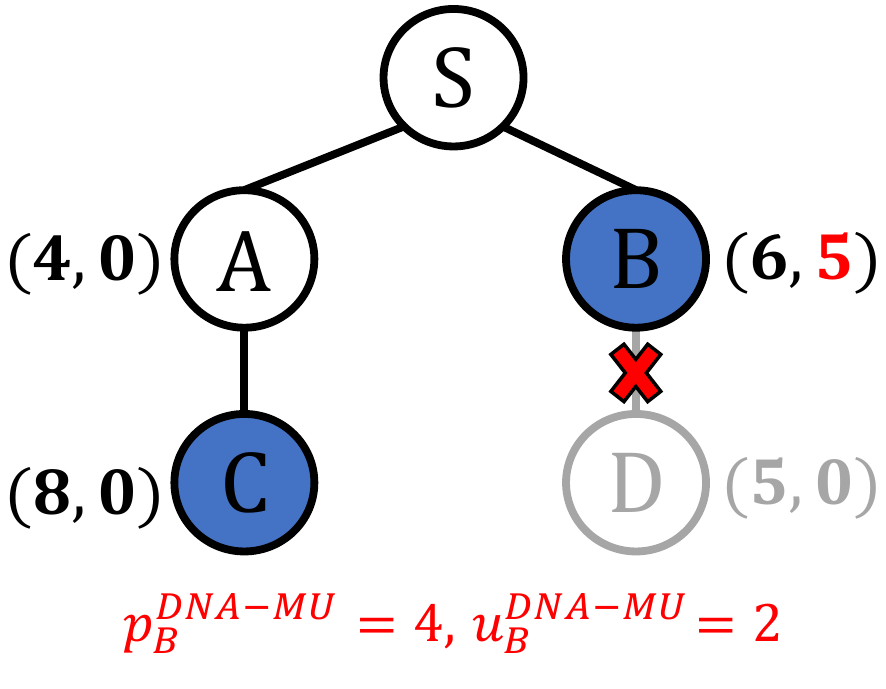}    
    }
    \caption{An example of misreporting in multi-unit-demand setting}
    \label{figCompare}
\end{figure}

\subsubsection{The Difficulty of Multi-unit Demand}
Despite the issues of GIDM and DNA-MU in the unit-demand setting, their methods will not work in multi-unit-demand settings.
The intuitive reason is that these mechanisms allow buyers to utilize their children's valuations to win items. Such a design is not applicable when buyers have a multi-dimensional valuation function, because a buyer can misreport her valuations to mimic the same demand as she has different neighbors. Figure~\ref{figCompare} shows an example of how this can be achieved in DNA-MU. Assume there are two items to sell and items are allocated to buyers with blue color. Figures~\ref{figCompare1} and \ref{figCompare2} show that in the single-unit-demand setting, buyer B can gain more utility from inviting buyer D. However, if we go to a multi-unit-demand setting with two units, then B can misreport her second valuation to $5$ to mimic that she has a child with a valuation $5$ to gain (Figure~\ref{figCompare3}). Thus, there is a conflict between reporting valuation truthfully and inviting neighbors. 
Therefore, the challenge seems a lot harder in the multi-unit-demand case.


\section{Diffusion Auction Mechanism on Trees}
In this section, we first design an IR and IC mechanism called layer-based diffusion mechanism on trees (LDM-Tree), for buyers with multi-dimensional valuations on tree-structure networks. Given a report profile $\hat{\theta}$, a tree $\mathcal{T}(\hat{\theta}) = (V(\hat{\theta}) \cup \{s\}, E(\hat{\theta}))$ can be constructed where seller $s$ is the root and $V(\hat{\theta}) = \{i | i \in Q(\hat{\theta})\}$. For each $i \in V(\hat{\theta}) \cup \{s\}$ and her neighbor $j \in \hat{r}_i$, there is an edge $(i,j) \in E(\hat{\theta})$. We use $C_i(\hat{\theta})$ to denote buyer $i$'s children in $\mathcal{T}(\hat{\theta})$ for any $i \in V(\hat{\theta})$. We use $l^{max}$ to denote the total number of layers in $\mathcal{T}(\hat{\theta})$. Generally speaking, LDM-Tree has the following three steps.

\begin{enumerate}
    \item \textbf{Prioritize:} LDM-Tree first prioritizes buyers by the layers. Layers closer to the seller have higher priority to be considered in the allocation. Then LDM traverses all layers based on priority with the following allocation and payment rules.
    
    \item \textbf{Allocation Policy:} When considering the allocation of layer $l$, we first remove a set of buyers $\mathcal{R}_l(\hat{\theta}) \subseteq \mathcal{L}_{> l}(\hat{\theta})$ from the higher layers that contains all potential competitors.(We will define exact $\mathcal{R}_l(\hat{\theta})$ later) Then in the remaining buyers, we fix the allocations to all layer $q < l$ and compute the constrained optimal social welfare $\mathcal{SW}_{-\mathcal{R}_l}(\hat{\theta})$ to determine the allocation of layer $l$.



    \item \textbf{Payment Policy:} When computing the payment of $i \in \mathcal{L}_l(\hat{\theta})$, we first remove $D_i(\hat{\theta}) =  \mathcal{R}_l(\hat{\theta}) \cup C_i(\hat{\theta}) \cup \{i\}$ which hide all $i$'s children. Then in the remaining buyers, we compute the maximal social welfare given that the allocations to lower layers are fixed, which is denoted by $\mathcal{SW}_{-D_i}(\hat{\theta})$. The social welfare difference $\mathcal{SW}_{-\mathcal{R}_l}(\hat{\theta}) - \mathcal{SW}_{-D_i}$ is used to calculate diffusion rewards(payments) which is a kind of resale revenue mentioned in Section \ref{faliure}.
    
\end{enumerate}

\noindent The formal definition of LDM-Tree is given in Algorithm \ref{NRM-T}.

\begin{breakablealgorithm}
	\caption{Layer-based Diffusion Mechanism on Trees}
	\begin{algorithmic}[1]
		\REQUIRE A report profile $\hat{\theta}$;
		\ENSURE  $\pi(\hat{\theta})$ and $p(\hat{\theta})$;
		
		\STATE Construct the tree $\mathcal{T}(\hat{\theta})$
		\STATE Initialize $\mathcal{K}^{remain} = \mathcal{K}$;
		\FOR{$l = 1, 2 , \cdots, l^{max}$}
		\STATE Compute the following constrained optimization problem and let $\pi^l(\hat{\theta})$ be the optimal solution.
        \begin{align*}
				\max\limits_{\pi(\hat{\theta})} \quad   & \mathcal{SW}_{-\mathcal{R}_{\emph{l}}}(\hat{\theta})  =\sum_{i \in Q(\hat{\theta}) \setminus \mathcal{R_{\emph{l}}}(\hat{\theta})} \hat{v}_i (\pi_i(\hat{\theta}))  \\
				\textrm{s.t.} \quad
				& \text{When $l \neq 1$, } \,\, \forall q<l, \,\, \forall j \in \mathcal{L_{\emph{q}}},  \,\, \pi_j(\hat{\theta}) = \pi_j^q(\hat{\theta}) 
		\end{align*}

		
		
		\FOR{$i \in \mathcal{L_\emph{l}}(\hat{\theta})$}		
		\STATE Compute the constrained optimization problem:
		 \begin{align*}
				\max\limits_{\pi(\hat{\theta})} \quad   & \mathcal{SW}_{-D_i}(\hat{\theta}) = \sum_{j\in Q(\hat{\theta}) \setminus D_i} \hat{v}_j (\pi_j(\hat{\theta}))  \\
				\textrm{s.t.} \quad
				& \text{When $l \neq 1$, } \,\, \forall q<l, \,\, \forall j \in \mathcal{L_{\emph{q}}},  \,\, \pi_j(\hat{\theta}) = \pi_j^q(\hat{\theta}) 
		\end{align*}

        \STATE Set $\pi_i(\hat{\theta}) = \pi_i^l(\hat{\theta})$;

		\IF{$\pi_i^l(\hat{\theta}) \neq 0$}
		\STATE $\mathcal{K}^{remain} = \mathcal{K}^{remain} - \pi_i^l(\hat{\theta})$;
		\STATE $p_i(\hat{\theta}) = 	\mathcal{SW}_{-D_i}(\hat{\theta}) - (\mathcal{SW}_{-\mathcal{R_{\emph{l}}}}(\hat{\theta}) - \hat{v}_i(\pi_i^l(\hat{\theta})))$;
	
        \ELSE
		\STATE $p_i(\hat{\theta}) = 	\mathcal{SW}_{-D_i}(\hat{\theta}) - \mathcal{SW}_{-\mathcal{R_{\emph{l}}}}(\hat{\theta})$;
		\ENDIF
		\ENDFOR
		\IF{$\mathcal{K}^{remain} = 0$}
		\STATE Set $\pi_i(\hat{\theta}) = p_i(\hat{\theta}) = 0$, $\forall k>l, \forall i \in \mathcal{L_\emph{k}}(\hat{\theta})$
		\STATE break
		\ENDIF
		\ENDFOR
		\STATE Return $\pi_i(\hat{\theta})$ and $p_i(\hat{\theta})$ for each buyer $i$.
	\end{algorithmic}
	\label{NRM-T}
\end{breakablealgorithm}

Note that $\mathcal{SW}_{-\mathcal{R}_l}(\hat{\theta})$ and $\mathcal{SW}_{-D_i}(\hat{\theta})$ can be computed efficiently. As buyers have diminishing marginal valuations, consider the following greedy algorithm for selling $\mathcal{K}$ items. First sort all marginal valuations of all buyers, then allocate items to buyers corresponding to the largest $\mathcal{K}$ valuations. This algorithm is efficient and optimal.


In LDM-Tree, $\mathcal{R}_l(\hat{\theta})$ needs to be carefully designed to guarantee incentive compatibility. Intuitively, it should contain all \emph{potential competitors} of layer $l$ who are the buyers with positive \textit{potential utility} because these buyers have the motivation to act untruthfully to manipulate the allocation of high-priority buyers to earn. 

In our solution, we divide those potential competitors into the following two parts and design them for each buyer $i$: 
\begin{enumerate}
    \item Buyers who diffuse information to potential winners with large valuations, which is denoted as $C_i^{\mathcal{P}}(\hat{\theta})$ for buyer $i$. 
    \item Buyers who are potential winners, which is denoted as $C_i^{\mathcal{W}}(\hat{\theta})$ for buyer $i$.
\end{enumerate}

The corresponding sets are designed as follows:
\begin{enumerate}
    \item $C_i^{\mathcal{P}}(\hat{\theta})  = \{j| j \in C_i(\hat{\theta}), C_j(\hat{\theta}) \neq \emptyset\} \subseteq C_i(\hat{\theta})$ which is the children of $i$ who have children.
    \item $C_i^{\mathcal{W}}(\hat{\theta})$ is the top $\mathcal{K} + \mu - |C_i^{\mathcal{P}}(\hat{\theta})|$ ranked buyers in $C_i(\hat{\theta}) \setminus C_i^{\mathcal{P}}(\hat{\theta})$ according their valuation report for the first unit,  where $\mu$ is a constant with $\max_i |C_i^{\mathcal{P}}(\hat{\theta})|\le \mu$. 
    

\end{enumerate}

We assume $\mu$ is prior information for the seller which only depends on the structure of the network. It is the upper bound $|C_i^{\mathcal{P}}|$ for all $i$, and can be obtained from known network data. The reason why we introduce $\mu$ is to avoid buyers having incentives to switch from $C_i^{\mathcal{P}}(\hat{\theta})$ and  $C_i^{\mathcal{W}}(\hat{\theta})$. Such a phenomenon will happen if we only remove the top $\mathcal{K}$ ranked buyers for $C_i^{\mathcal{W}}(\hat{\theta})$ and will destroy incentive compatibility.

Let $C_i^{\mathcal{R}}(\hat{\theta}) = C_i^{\mathcal{W}}(\hat{\theta}) \cup C_i^{\mathcal{P}}(\hat{\theta})$ be the total removed set for buyer $i$. We consider layer $l$, we will remove all $C_i^{\mathcal{R}}(\hat{\theta})$ for $i \in  \mathcal{L}_l(\hat{\theta})$. Note that when this set is removed, all buyers in $\mathcal{L}_{\ge l+2}$ will not get information and also be removed because $C_i^{\mathcal{P}}(\hat{\theta})$ contains all children of $i$(in layer $l+1$) who have children. Thus, the final definition of $\mathcal{R_{\emph{l}}}(\hat{\theta})$ is  $\left(\bigcup_{i \in \mathcal{L_\emph{l}}(\hat{\theta})} C_i^{\mathcal{R}} (\hat{\theta}) \right) \cup \left( \bigcup_{\emph{l}+2 \le d \le l^{max}} \mathcal{L_{\emph{d}}}(\hat{\theta}) \right)$.

We give a notation table here for all complex notations given in the above discussion. These notations will be widely used in the following discussion.
\begin{center}
\begin{tabular}{ | m{3.5em} | m{18em}| } 
  \hline
  \textbf{Notation} & \textbf{Definition}\\ 
  \hline
  $C_i^{\mathcal{P}}(\hat{\theta})$ & The children of $i$ who have children. This set contains all buyers who can diffuse information to potential winners. \\
  \hline
  $\mu$ & A pre-known upper bound for $C_i^{\mathcal{P}}(\hat{\theta})$. $(\text{i.e.} \max_i |C_i^{\mathcal{P}}(\hat{\theta})| \le \mu))$ \\
  \hline
   $C_i^{\mathcal{W}}(\hat{\theta})$ &  The top $\mathcal{K} + \mu - |C_i^{\mathcal{P}}(\hat{\theta})|$ ranked buyers in $C_i(\hat{\theta}) \setminus C_i^{\mathcal{P}}(\hat{\theta})$ according their valuation report for the first unit. This set contains all potential winners.  \\ 
  \hline
  $C_i^{\mathcal{R}}(\hat{\theta})$ & $C_i^{\mathcal{R}}(\hat{\theta}) = C_i^{\mathcal{W}}(\hat{\theta}) \cup C_i^{\mathcal{P}}(\hat{\theta})$ which is the total removed set for buyer $i$. Removing this set can remove all potential competitors of buyer $i$.\\
  \hline
  $\mathcal{R_{\emph{l}}}(\hat{\theta})$ & $\left(\bigcup_{i \in \mathcal{L_\emph{l}}(\hat{\theta})} C_i^{\mathcal{R}} (\hat{\theta}) \right) \cup \left( \bigcup_{\emph{l}+2 \le d \le l^{max}} \mathcal{L_{\emph{d}}}(\hat{\theta}) \right)$ which is the total removed set when considering the allocation of layer $l$. This set contains all potential competitors of layer $l$.\\
  \hline
\end{tabular}
\end{center}

Here we give an example to illustrate our mechanism. We first show the removed sets through Figure \ref{figTree}. Suppose $\mathcal{K}=3$ and $\mu = 2$. Since $\hat{\theta}$ is given in the figure, we will omit it to simplify the description.
It is easy to see $C_f^{\mathcal{R}}=\{j\}, C_n^{\mathcal{R}}=\{q\}$ and $C_o^{\mathcal{R}}=\{r\}$. For the buyer $b$ with $C_b=\{d,e,f,g,h,i\}$, we have $C_b^{\mathcal{P}}=\{f,g\}$. Thus $\mu + \mathcal{K} - |C_b^{\mathcal{P}}| = 3$ i.e. $C_b^{\mathcal{W}} = \{d,e,h\}$. We have marked $C_b^{\mathcal{P}},C_b^{\mathcal{W}}$ in Figure \ref{figTree}. Taking the union of these two sets can get $C_{b}^{\mathcal{R}}=\{d,e,f,g,h\}$. For the buyer $g$ with $C_g=\{k,l,m,n,o,p\}$, we have $C_g^{\mathcal{P}}=\{n,o\}$, $C_g^{\mathcal{W}}=\{k,l,m\}$. $C_g^{\mathcal{P}}$ and $C_g^{\mathcal{W}}$ are also marked in Figure \ref{figTree}. Taking the union of these two sets can get $C_{g}^{\mathcal{R}}=\{k,l,m,n,o\}$. For other buyers, the removed sets are $\emptyset$.

\begin{figure}[htbp]
    \centering
    \includegraphics[scale=0.4]{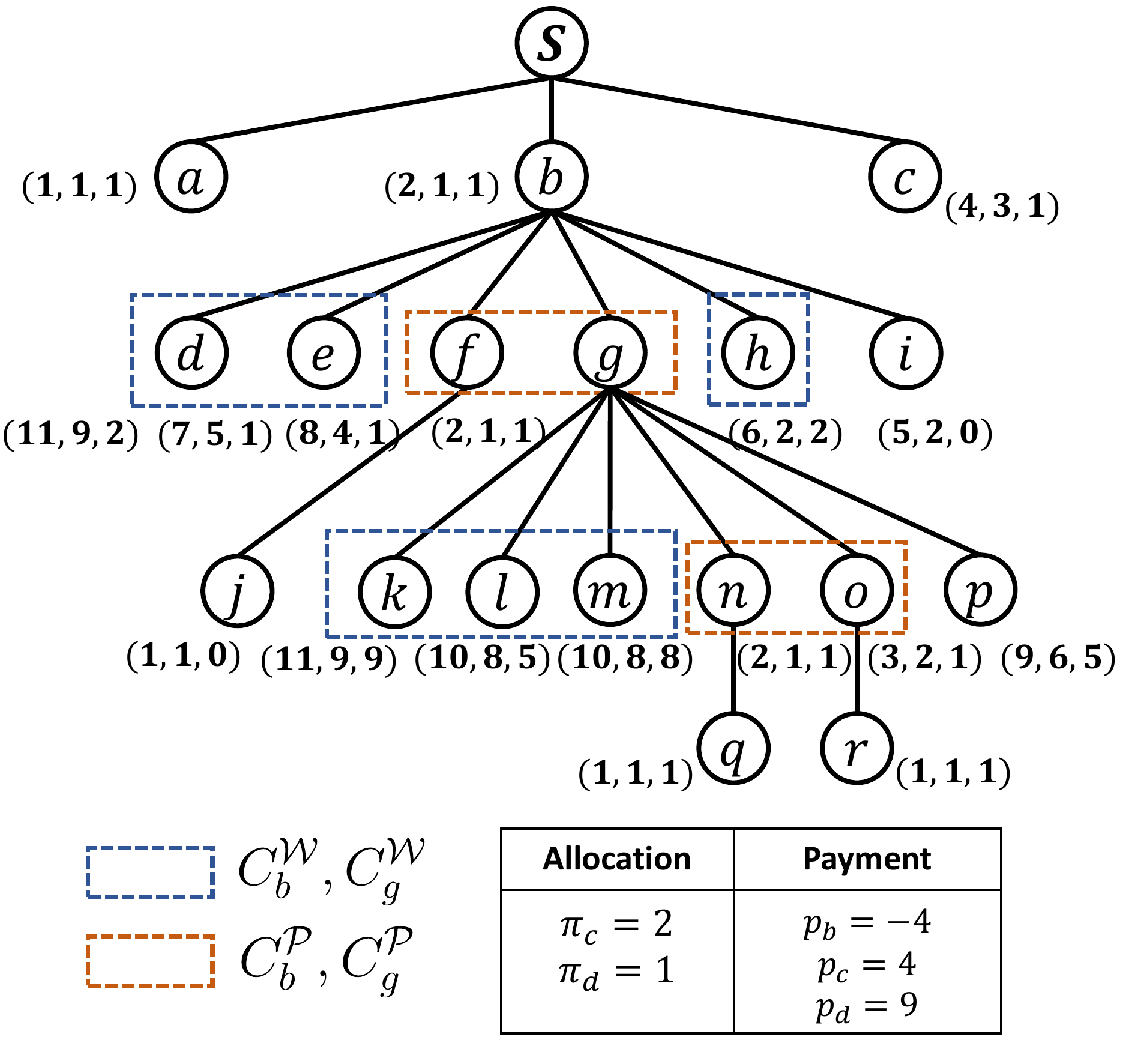}
    \caption{A tree example}
    \label{figTree}
\end{figure}

Then, let's compute the allocation and payment in Figure \ref{figTree}. For the first layer, the constrained optimization problem in step 4 of algorithm \ref{NRM-T} will be computed among $Q\setminus \mathcal{R}_{1}=\{a,b,c,i\}$. Since we now focus on the first layer, the constraint means that no items have been allocated. The solution is  $\pi^1_a=0,\pi^1_b=0,\pi^1_c=2, \pi^1_i=1$ and $\mathcal{SW}_{-\mathcal{R}_{\text{1}}}=5+4+3=12$. For buyer $a$, $\mathcal{SW}_{-D_a} = 5+4+3=12$ and her payment is $p_a=0$. For buyer $b$, $\mathcal{SW}_{-D_b} = 4+3+1=8$ and her payment is $p_b=8-12=-4$, i.e., buyer $b$ receives 4 from  the seller. For buyer $c$, $\mathcal{SW}_{-D_c} = 5+2+2=9$ and her payment is $p_c=9-(12-(4+3))=4$. Update $\mathcal{K}^{remain}=1$.

For the second layer, the constrained optimization problem in step 4 of algorithm \ref{NRM-T} will be computed among $Q \setminus \mathcal{R}_{\mathcal{L}_2} = \{a, b, c, d, e, f, g, h, i, p\}$. According to the allocation of the first layer, the constraint means that two items have been allocated to buyer $c$. There is only one item left and the solution is to allocate the last item to buyer $d$ , i.e., $\pi^2_d = 1$. Then $\mathcal{SW}_{-\mathcal{R}_{\text{2}}} = \hat{v}_c^1 + \hat{v}_c^2 + \hat{v}_d^1 = 4 + 3 + 11 = 18$. For buyer $d$, $\mathcal{SW}_{-D_d} = 9+4+3=16$ and her payment is $p_d=16-(18-11)=9$. For other buyers in $\mathcal{L}_2$, their payments are 0. Update $\mathcal{K}^{remain}=0$ and the auction is completed.

The seller's revenue is $9+4-4=9$ under LDM-Tree. If the seller only sells to her neighbors by VCG, it has $p_b^{VCG}=4+3+1-(4+3+2-2)=1$ and $p_c^{VCG}=2+1+1-(4+3+2-(4+3))=2$. The revenue of VCG is 3 and LDM-Tree achieves a higher revenue than VCG in this example. 

In the proofs of LDM-Tree's properties, we will formally analyze the necessity of the removed sets in detail. 


\begin{theorem} \label{t1}
The LDM-Tree is individually rational (IR).
\end{theorem}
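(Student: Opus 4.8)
The plan is to show that any buyer who reports her valuation truthfully — while inviting an arbitrary subset $\hat r_i \subseteq r_i$ of her neighbours — ends up with utility exactly $\mathcal{SW}_{-\mathcal{R}_l}(\hat\theta) - \mathcal{SW}_{-D_i}(\hat\theta)$, where $l$ is $i$'s layer, and then to argue this quantity is non-negative by a monotonicity property of the constrained welfare‑maximisation programs in Algorithm~\ref{NRM-T}. Fix such a buyer $i$ and write $\hat\theta = ((v_i,\hat r_i),\hat\theta_{-i})$. If $i \notin Q(\hat\theta)$, then $\pi_i(\hat\theta) = p_i(\hat\theta) = 0$ by the definition of a diffusion auction, so $u_i(\hat\theta) = 0$; likewise if $i$ lies in a layer the main loop never reaches because $\mathcal{K}^{remain}$ has already dropped to $0$, we again have $\pi_i = p_i = 0$. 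So it remains to treat $i \in \mathcal{L}_l(\hat\theta)$ for a layer $l$ that is actually processed, where $i$ receives $\pi_i^l(\hat\theta)$ units together with one of the two payments prescribed in the algorithm.

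Next I would compute $u_i(\hat\theta) = v_i(\pi_i^l(\hat\theta)) - p_i(\hat\theta)$ in the two cases, using that $i$ reports truthfully so $\hat v_i = v_i$ and hence $v_i(\pi_i^l(\hat\theta)) = \hat v_i(\pi_i^l(\hat\theta))$. When $\pi_i^l(\hat\theta) = 0$ the payment is $\mathcal{SW}_{-D_i}(\hat\theta) - \mathcal{SW}_{-\mathcal{R}_l}(\hat\theta)$ and $v_i(0) = 0$, so $u_i(\hat\theta) = \mathcal{SW}_{-\mathcal{R}_l}(\hat\theta) - \mathcal{SW}_{-D_i}(\hat\theta)$. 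When $\pi_i^l(\hat\theta) \neq 0$ the payment is $\mathcal{SW}_{-D_i}(\hat\theta) - \mathcal{SW}_{-\mathcal{R}_l}(\hat\theta) + \hat v_i(\pi_i^l(\hat\theta))$, and substituting this into $u_i$ the two copies of $\hat v_i(\pi_i^l(\hat\theta))$ cancel, leaving again $u_i(\hat\theta) = \mathcal{SW}_{-\mathcal{R}_l}(\hat\theta) - \mathcal{SW}_{-D_i}(\hat\theta)$. Thus in all cases IR reduces to the single inequality $\mathcal{SW}_{-\mathcal{R}_l}(\hat\theta) \ge \mathcal{SW}_{-D_i}(\hat\theta)$.

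Finally I would prove that inequality by a ``removing more buyers cannot raise the optimum'' argument. By definition $D_i(\hat\theta) = \mathcal{R}_l(\hat\theta) \cup C_i(\hat\theta) \cup \{i\}$, so $\mathcal{R}_l(\hat\theta) \subseteq D_i(\hat\theta)$; moreover $\mathcal{R}_l(\hat\theta) \subseteq \mathcal{L}_{>l}(\hat\theta)$ and $C_i(\hat\theta)\cup\{i\} \subseteq \mathcal{L}_l(\hat\theta) \cup \mathcal{L}_{l+1}(\hat\theta)$, so neither removed set meets $\mathcal{L}_{<l}(\hat\theta)$ and the two programs carry the identical lower‑layer constraints $\pi_j(\hat\theta) = \pi_j^q(\hat\theta)$ for $q < l$. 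Hence any allocation feasible for the $\mathcal{SW}_{-D_i}$ program — after extending it by assigning $0$ units to the extra buyers in $D_i(\hat\theta) \setminus \mathcal{R}_l(\hat\theta)$ — is feasible for the $\mathcal{SW}_{-\mathcal{R}_l}$ program with the same objective value, so the maximum over $Q(\hat\theta)\setminus\mathcal{R}_l(\hat\theta)$ is at least the maximum over $Q(\hat\theta)\setminus D_i(\hat\theta)$, giving $u_i(\hat\theta) \ge 0$. I do not expect a real obstacle; the only points needing care are the bookkeeping observation that truthful valuation reporting is precisely what makes the winner's $\hat v_i(\pi_i^l(\hat\theta))$ terms cancel (so the claim genuinely needs only truthful valuations, not truthful invitations), and a quick check that the constrained programs are feasible at all — i.e. that the fixed lower‑layer allocations never exceed $\mathcal{K}$ — which follows from the $\mathcal{K}^{remain}$ accounting in the loop.
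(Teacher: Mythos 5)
Your proof is correct and follows essentially the same route as the paper's: reduce the utility of a truthful reporter in a processed layer to $\mathcal{SW}_{-\mathcal{R}_l}(\hat\theta) - \mathcal{SW}_{-D_i}(\hat\theta)$ (with the cancellation of $\hat v_i(\pi_i^l(\hat\theta))$ in the winning case), and then observe that $Q(\hat\theta)\setminus D_i(\hat\theta) \subseteq Q(\hat\theta)\setminus \mathcal{R}_l(\hat\theta)$ under identical lower-layer constraints forces this difference to be non-negative. Your version is simply more explicit than the paper's about the case analysis and about why the two optimisation programs share the same constraints.
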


\begin{proof}
If $\hat{\theta}_i = (v_i, \hat{r}_i)$ and $d_i = l$, then $u_i(\hat{\theta}_i, \hat{\theta}_{-i}) = \mathcal{SW}_{-\mathcal{R_{\emph{l}}}} - \mathcal{SW}_{-D_i}$ or $u_i(\hat{\theta}_i, \hat{\theta}_{-i}) = 0$. From the definition, the only difference between $\mathcal{SW}_{-D_i}$ and $\mathcal{SW}_{-\mathcal{R_{\emph{l}}}}$ is that different sets of buyers are considered when doing optimization. Since $Q \setminus D_i \subseteq Q \setminus \mathcal{R}_{l}, \mathcal{SW}_{-\mathcal{R_{\emph{l}}}} \ge  \mathcal{SW}_{-D_i}$. Thus $u_i((v_i, \hat{r}_i), \hat{\theta}_{-i}) \ge 0$.
\end{proof}

We can prove LDM-Tree is incentive compatible, which is given in Theorem \ref{t2}. Our proof is based on the key observations that buyers with positive potential utility have no incentive to change the removed set by misreporting valuations or stopping inviting. In that case, those buyers who are removed when computing lower layers' allocations have no incentives to affect the allocations to lower layers. The detailed analysis of such observation and the whole proof for IC are given in the appendix \ref{app:IC proof}.

\begin{theorem} \label{t2}
The LDM-Tree is incentive compatible (IC).
\end{theorem}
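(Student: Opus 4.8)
The plan is to fix a buyer $i$ with true type $\theta_i = (v_i, r_i)$, assume the others report $\hat{\theta}_{-i}$ arbitrarily, and show that reporting $\theta_i$ truthfully weakly dominates any report $\hat{\theta}_i = (\hat{v}_i, \hat{r}_i)$. I would split a deviation into two components and handle them separately: (a) a valuation misreport $\hat{v}_i \neq v_i$ with full invitation, and (b) an invitation restriction $\hat{r}_i \subsetneq r_i$ (possibly combined with a valuation misreport). The backbone of the argument is the observation already flagged in the text: a buyer with positive \emph{potential utility} has no incentive to change the removed set $\mathcal{R}_l$ affecting her own layer, so from $i$'s perspective the buyers that get removed when lower layers are processed are effectively frozen, and $i$ faces a clean ``residual'' allocation problem.

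First I would establish the key structural lemma: for each layer $l$, the set $\mathcal{R}_l(\hat{\theta})$ together with the fixed allocations $\pi^q$ for $q<l$ is computed from reports of buyers in $\mathcal{L}_{<l} \cup \mathcal{L}_l$ only in a way that is insensitive to how the potential competitors themselves report, because exactly those competitors are the ones removed. Concretely, I would argue (i) $i$'s own layer index $d_i(\hat{\theta})$ cannot be decreased by $i$, and decreasing it is impossible while increasing it only hurts (buyers in earlier layers already took their units, and non-wastefulness may terminate the process before $i$'s layer); (ii) $i$'s membership in $C_j^{\mathcal{P}}$ versus $C_j^{\mathcal{W}}$ of her parent $j$ — this is where $\mu$ does its work: because $C_j^{\mathcal{W}}$ is sized $\mathcal{K}+\mu-|C_j^{\mathcal{P}}|$, whether $i$ reveals children or not, and whatever first-unit value she reports, if $i$ is a ``real'' competitor she stays inside $C_j^{\mathcal{R}}$, so she cannot escape removal by switching category; and (iii) therefore the pair $(\mathcal{SW}_{-\mathcal{R}_l}, \text{the residual problem } i \text{ participates in})$ that determines $i$'s allocation and payment depends on $i$'s report only through the quantities that a VCG-type argument controls.

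Next I would reduce to a VCG-style verification. On the residual problem — all lower layers fixed, all of $\mathcal{R}_l$ removed — the allocation $\pi_i^l$ is the welfare-maximizing quantity for $i$ given the reports of the other surviving buyers, and the payment is $p_i = \mathcal{SW}_{-D_i} - (\mathcal{SW}_{-\mathcal{R}_l} - \hat{v}_i(\pi_i^l))$, which is precisely a Clarke pivot payment: $\mathcal{SW}_{-D_i}$ (the optimal welfare of the others when $i$ and her children are absent) minus the externality-adjusted welfare of the others in $i$'s presence. Since $D_i = \mathcal{R}_l \cup C_i \cup \{i\}$ removes $i$'s children as well, the ``threat'' term $\mathcal{SW}_{-D_i}$ is independent of $\hat{\theta}_i$, so the standard argument gives $u_i = \mathcal{SW}_{-\mathcal{R}_l} - \mathcal{SW}_{-D_i}$ is maximized by truthful $\hat{v}_i$ when $i$ does not alter the removed sets. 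I would then check the two remaining escape routes: reporting $\hat{v}_i$ to move to a different layer (ruled out by (i)), and restricting invitations — here I use that shrinking $\hat{r}_i$ only shrinks $Q$, weakly shrinks the residual competitor pool, but by the definition of $D_i$ the benchmark $\mathcal{SW}_{-D_i}$ already assumes $i$'s entire subtree is gone, so inviting more neighbors never decreases $\mathcal{SW}_{-\mathcal{R}_l}$ relative to the unchanged benchmark — monotonicity of optimal welfare in the buyer set closes the case. Finally I would note the boundary case where $\mathcal{K}^{remain}$ hits $0$ before $i$'s layer: then $i$ gets $0$ and pays $0$, and no deviation of $i$ can rescue units already committed to strictly-earlier layers.

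The main obstacle I anticipate is step (ii)–(iii): proving rigorously that a valuation misreport by $i$ cannot profitably reshuffle which of $i$'s \emph{neighbors} (or neighbors-of-neighbors) land in the removed sets of layer $d_i-1$ or $d_i$, thereby indirectly changing the competition $i$ faces — this is exactly the phenomenon that broke GIDM and DNA-MU, and it is the reason $C_i^{\mathcal{W}}$ is padded by $\mu$ and the reason $\mathcal{R}_l$ swallows all of $\mathcal{L}_{\ge l+2}$. Making the ``$\mu$-padding forbids category-switching'' argument airtight, and simultaneously handling the interaction between an invitation restriction and a valuation misreport (a combined deviation, as in Figure 2), will be the delicate part; I would isolate it as a self-contained lemma before assembling the full dominant-strategy proof.
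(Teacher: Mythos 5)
Your proposal follows essentially the same route as the paper's proof: the two-step decomposition $u_i(v_i,r_i)\ge u_i(v_i,\hat r_i)\ge u_i(\hat v_i,\hat r_i)$, the key lemma that a buyer who remains a potential competitor cannot alter the removed set (with the $\mu$-padding absorbing a switch from $C_j^{\mathcal{P}}$ to $C_j^{\mathcal{W}}$), the Clarke-pivot reading of the payment with $\mathcal{SW}_{-D_i}$ independent of $i$'s report, and monotonicity of $\mathcal{SW}_{-\mathcal{R}_l}$ in the invited set. The ``delicate part'' you isolate at the end is precisely what the paper formalizes as its Observations 1 and 2, so your plan is correct and matches the published argument.
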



Although the exact social welfare of LDM-Tree depends on the tree structure, the following proposition gives the tight lower bound of the social welfare of LDM-Tree. 
\begin{proposition} \label{t3}
The social welfare of LDM-Tree is no less than the social welfare of the VCG in the first layer.
\end{proposition}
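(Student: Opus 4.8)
The plan is to establish $SW^{\mathrm{LDM}}(\hat\theta) = \mathcal{SW}_{-\mathcal{R}_L}(\hat\theta) \ge \mathcal{SW}_{-\mathcal{R}_1}(\hat\theta) \ge SW^{\mathrm{VCG}}_{1}(\hat\theta)$, where $SW^{\mathrm{LDM}}(\hat\theta)$ is the social welfare realized by LDM-Tree, $SW^{\mathrm{VCG}}_{1}(\hat\theta)$ is the efficient welfare of VCG run among the layer-$1$ buyers $\mathcal{L}_1(\hat\theta)$ with all $\mathcal{K}$ items, and $L$ is the last layer processed by the main loop of Algorithm~\ref{NRM-T} (either the layer at which it breaks, or $l^{max}$). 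The rightmost inequality is the easy one: the layer-$1$ optimization carries no fixed-lower-layer constraint and ranges over $Q(\hat\theta)\setminus\mathcal{R}_1(\hat\theta)$, and since $\mathcal{R}_1(\hat\theta)$ consists only of children of layer-$1$ buyers together with $\mathcal{L}_{>2}(\hat\theta)$, we have $\mathcal{L}_1(\hat\theta)\subseteq Q(\hat\theta)\setminus\mathcal{R}_1(\hat\theta)$; hence the efficient allocation of the $\mathcal{K}$ items among $\mathcal{L}_1(\hat\theta)$ --- which is exactly the VCG allocation --- is a feasible point of the layer-$1$ program, so $\mathcal{SW}_{-\mathcal{R}_1}(\hat\theta) \ge SW^{\mathrm{VCG}}_{1}(\hat\theta)$.

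For the middle inequality I would first observe that the removed sets are nested, $\mathcal{R}_{l+1}(\hat\theta) \subseteq \mathcal{R}_l(\hat\theta)$. This is definitional: for $i\in\mathcal{L}_l(\hat\theta)$ the set $C_i^{\mathcal{R}}(\hat\theta)$ is made of children of $i$ and hence lies in $\mathcal{L}_{l+1}(\hat\theta)$, so $\mathcal{R}_l(\hat\theta)\subseteq\mathcal{L}_{>l}(\hat\theta)$, whereas $\mathcal{R}_l(\hat\theta)\supseteq\mathcal{L}_{>l+1}(\hat\theta)\supseteq\mathcal{R}_{l+1}(\hat\theta)$. Given this, the optimal solution $\pi^l$ of the layer-$l$ program (in the canonical greedy form, so supported on $Q(\hat\theta)\setminus\mathcal{R}_l(\hat\theta)$) is feasible for the layer-$(l+1)$ program: it respects every constraint ``layer $q<l+1$ fixed'' --- for $q<l$ because $\pi^l$ already obeys the layer-$l$ constraints, and for $q=l$ trivially since that is precisely what $\pi^l$ defines --- and its support lies in $Q(\hat\theta)\setminus\mathcal{R}_l(\hat\theta)\subseteq Q(\hat\theta)\setminus\mathcal{R}_{l+1}(\hat\theta)$. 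Therefore $\mathcal{SW}_{-\mathcal{R}_{l+1}}(\hat\theta)\ge\mathcal{SW}_{-\mathcal{R}_l}(\hat\theta)$, and iterating gives $\mathcal{SW}_{-\mathcal{R}_L}(\hat\theta)\ge\mathcal{SW}_{-\mathcal{R}_1}(\hat\theta)$.

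It remains to argue $SW^{\mathrm{LDM}}(\hat\theta)=\mathcal{SW}_{-\mathcal{R}_L}(\hat\theta)$, and this is where I expect the only real friction. For every layer $q<L$, the allocation that $\mathcal{SW}_{-\mathcal{R}_L}(\hat\theta)$ pins down is exactly the one LDM-Tree committed, so those contributions match, and LDM-Tree also commits $\pi^L$ restricted to $\mathcal{L}_L(\hat\theta)$. What must be excluded is $\pi^L$ placing items on buyers in layers deeper than $L$ (which $\mathcal{SW}_{-\mathcal{R}_L}(\hat\theta)$ would count but LDM-Tree, which zeroes those layers after breaking, would not realize). If $L=l^{max}$ there are no deeper layers --- and in fact $\mathcal{R}_{l^{max}}(\hat\theta)=\emptyset$, since last-layer buyers have no children. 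If $L<l^{max}$, the loop broke because $\mathcal{K}^{remain}=0$ right after committing layer $L$, i.e.\ the items committed in layers $1,\dots,L-1$ plus $\sum_{i\in\mathcal{L}_L(\hat\theta)}\pi^L_i(\hat\theta)$ total $\mathcal{K}$; hence the layer-$L$ program had exactly $\sum_{i\in\mathcal{L}_L(\hat\theta)}\pi^L_i(\hat\theta)$ free items to place among the not-yet-fixed buyers, and since its optimal solution puts that many of them in $\mathcal{L}_L(\hat\theta)$, none land on deeper layers. Either way $SW^{\mathrm{LDM}}(\hat\theta)=\mathcal{SW}_{-\mathcal{R}_L}(\hat\theta)$, which closes the chain. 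Finally, the bound is tight: when the seller is directly connected to every buyer we have $\mathcal{L}_1(\hat\theta)=Q(\hat\theta)$ and LDM-Tree's layer-$1$ step is literally VCG, so equality holds --- which is why the statement is phrased as ``no less than''. The main obstacle is thus the spill-over claim in the last paragraph; the rest is monotonicity bookkeeping plus one feasibility observation.
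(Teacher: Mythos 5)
Your proof is correct and follows the same chain as the paper's one-line argument, namely $\mathcal{SW}^{LDM}\ge \mathcal{SW}_{-\mathcal{R}_1}\ge \max_{\pi}\sum_{i\in\mathcal{L}_1}\hat{v}_i=\mathcal{SW}^{VCG}$. The only difference is that you carefully justify the first inequality (via nestedness $\mathcal{R}_{l+1}\subseteq\mathcal{R}_l$, feasibility of $\pi^l$ in the layer-$(l+1)$ program, and the no-spill-over argument at the break layer), which the paper simply asserts.
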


\begin{proof}
$\mathcal{SW}^{LDM} \ge \max\limits_{\pi} \sum\limits_{i \in Q \setminus \mathcal{R}_{1}} \hat{v}_i  \ge \max\limits_{\pi} \sum\limits_{i \in \mathcal{L}_{1}} \hat{v}_i  = \mathcal{SW}^{VCG}$
\end{proof}

The following proposition shows that LDM-Tree is non-wasteful whose proof is shown in the appendix \ref{app:nonwasterful-proof}.

\begin{proposition}
    LDM-Tree is non-wasteful.
\end{proposition}

In Theorem \ref{revenue}, we show the revenue improvements of LDM-Tree. The detailed proofs are postponed to the appendix \ref{app:revenue_proof}.
\begin{theorem}
The revenue of LDM-Tree is no less than the revenue of VCG mechanism in the first layer.
\label{revenue}
\end{theorem}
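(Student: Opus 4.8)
The plan is to compare the revenue of LDM-Tree with the revenue of VCG run only among the first-layer buyers $\mathcal{L}_1(\hat\theta)$, and show the former dominates term by term. First I would observe that for buyers in layers $\ge 2$, each payment $p_i(\hat\theta)$ equals either $\mathcal{SW}_{-D_i}(\hat\theta) - (\mathcal{SW}_{-\mathcal{R}_l}(\hat\theta) - \hat v_i(\pi_i^l(\hat\theta)))$ when $i$ wins, or $\mathcal{SW}_{-D_i}(\hat\theta) - \mathcal{SW}_{-\mathcal{R}_l}(\hat\theta)$ when $i$ loses; in both cases this is a ``marginal contribution''-style quantity. The key sub-claim is that each such payment is nonnegative: since $D_i \supseteq \mathcal{R}_l$ (indeed $D_i = \mathcal{R}_l \cup C_i \cup \{i\}$), removing the extra buyers $C_i(\hat\theta) \cup \{i\}$ can only lower the achievable constrained social welfare, but the winning payment adds back $\hat v_i(\pi_i^l(\hat\theta))$, which is exactly the contribution of $i$ to $\mathcal{SW}_{-\mathcal{R}_l}$; so intuitively $p_i \ge 0$ because $i$'s own children are worth at most what $i$ is being charged. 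I would make this rigorous by an exchange argument: an optimal allocation for $\mathcal{SW}_{-\mathcal{R}_l}$ restricted to $Q \setminus D_i$ together with giving $i$ back its $\pi_i^l$ units is feasible for the problem defining $\mathcal{SW}_{-\mathcal{R}_l}$, hence $\mathcal{SW}_{-\mathcal{R}_l} \ge \mathcal{SW}_{-D_i} + \hat v_i(\pi_i^l) - (\text{value of }C_i\text{-units displaced})$, which needs care — this is where the diminishing-marginal-utility greedy characterization and the precise structure of $C_i^{\mathcal{R}}$ come in.

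Next I would handle layer $1$ directly. For $i \in \mathcal{L}_1(\hat\theta)$ the constrained optimization has no fixed-allocation constraint, so $\mathcal{SW}_{-\mathcal{R}_1}(\hat\theta)$ is the optimal welfare over $Q(\hat\theta) \setminus \mathcal{R}_1(\hat\theta)$, a superset of $\mathcal{L}_1(\hat\theta)$, and $\mathcal{SW}_{-D_i}$ is the same over $Q \setminus D_i$. I would compare $p_i^{\text{LDM}}$ with $p_i^{\text{VCG}}$ where VCG is run on $\mathcal{L}_1$ alone: $p_i^{\text{VCG}} = \mathcal{SW}^{\text{VCG}}_{-i} - (\mathcal{SW}^{\text{VCG}} - \hat v_i(\pi_i^{\text{VCG}}))$ with both terms over $\mathcal{L}_1$ only. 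The plan is to show (a) $\mathcal{SW}_{-\mathcal{R}_1} - \hat v_i(\pi_i^1) \le \mathcal{SW}^{\text{VCG}} - \hat v_i(\pi_i^{\text{VCG}})$ is the wrong direction, so instead I bound the total revenue: $\mathrm{Rev}^{\text{LDM}} \ge \sum_{i \in \mathcal{L}_1} p_i^{\text{LDM}}$ (using layer-$\ge 2$ payments $\ge 0$), and then argue $\sum_{i \in \mathcal{L}_1} p_i^{\text{LDM}} \ge \mathrm{Rev}^{\text{VCG}}$. For the latter I would telescope: summing the winning-buyer formula over $i \in \mathcal{L}_1$, the $\hat v_i(\pi_i^1)$ terms reconstruct part of $\mathcal{SW}_{-\mathcal{R}_1}$, and each $\mathcal{SW}_{-D_i}$ term is at least the corresponding VCG quantity because $Q \setminus D_i$ contains $\mathcal{L}_1 \setminus \{i\}$ and possibly more positively-valued buyers. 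The monotonicity ``more buyers, weakly more optimal welfare'' from the proof of Theorem~\ref{t1} (the inclusion $Q \setminus D_i \subseteq Q \setminus \mathcal{R}_l$ argument) is the workhorse here, applied now to compare against the VCG-on-$\mathcal{L}_1$ instance rather than internally.

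Concretely the steps in order are: (1) prove every layer-$\ge 2$ payment is nonnegative via the exchange/greedy argument, so $\mathrm{Rev}^{\text{LDM}} \ge \sum_{i \in \mathcal{L}_1(\hat\theta)} p_i^{\text{LDM}}(\hat\theta)$; (2) write out $\sum_{i \in \mathcal{L}_1} p_i^{\text{LDM}}$ using the algorithm's payment formula and simplify using $\sum_{i \in \mathcal{L}_1} \hat v_i(\pi_i^1) = \mathcal{SW}_{-\mathcal{R}_1} - (\text{welfare of non-first-layer buyers in the optimum})$; (3) show each $\mathcal{SW}_{-D_i}(\hat\theta) \ge \mathcal{SW}^{\text{VCG}}_{-i}$ and $\mathcal{SW}_{-\mathcal{R}_1}(\hat\theta) \ge \mathcal{SW}^{\text{VCG}}$ (both by the ``superset of buyers'' monotonicity, noting $\pi^1$ restricted to $\mathcal{L}_1$ is feasible for VCG-on-$\mathcal{L}_1$), and combine; (4) conclude $\mathrm{Rev}^{\text{LDM}} \ge \mathrm{Rev}^{\text{VCG in first layer}}$.

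The main obstacle I expect is step (1): showing the payments in deeper layers are nonnegative is exactly the kind of claim that killed GIDM and DNA-MU, and here it is complicated by (i) the fixed-allocation constraint for layers $q < l$, which means $\mathcal{SW}_{-\mathcal{R}_l}$ and $\mathcal{SW}_{-D_i}$ are not ordinary optimal-welfare quantities but constrained ones with a possibly different residual supply $\mathcal{K}^{\text{remain}}$, and (ii) the asymmetry between $\mathcal{R}_l$ (which removes $C_i^{\mathcal{W}} \cup C_i^{\mathcal{P}}$ for \emph{all} $i$ in layer $l$) and $D_i$ (which additionally removes $C_i \setminus C_i^{\mathcal{R}}$ and $i$ itself). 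I would need the precise fact that $C_i^{\mathcal{R}}(\hat\theta)$ already contains the ``top $\mathcal{K}+\mu$''-worth of $i$'s children, so the additional children removed in $D_i$ have first-unit value no larger than those already removed, bounding how much welfare the extra removal can destroy relative to $\hat v_i(\pi_i^l)$. If a fully general term-by-term nonnegativity is too strong, the fallback is to prove only the aggregate inequality $\sum_{i \in \mathcal{L}_{\ge 2}} p_i^{\text{LDM}} \ge 0$ plus $\sum_{i \in \mathcal{L}_1} p_i^{\text{LDM}} \ge \mathrm{Rev}^{\text{VCG}}$, which still yields the theorem and may be more tractable by a global welfare-telescoping argument across all layers.
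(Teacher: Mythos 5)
There is a genuine gap here: the decomposition you propose does not hold, and both of its halves fail. First, the term-by-term claim that every payment in layers $l \ge 2$ is nonnegative is false. For a non-winner $i$ the payment is $p_i = \mathcal{SW}_{-D_i} - \mathcal{SW}_{-\mathcal{R}_l} \le 0$ by exactly the monotonicity ($Q\setminus D_i \subseteq Q \setminus \mathcal{R}_l$) you import from the IR proof, and it is \emph{strictly} negative whenever some child of $i$ outside $C_i^{\mathcal{R}}$ receives units in the layer-$l$ optimum; winners can likewise pay negative amounts when their surviving children are valuable. This is not a defect to be argued away --- it is the diffusion reward, and it occurs in deep layers just as in layer~1. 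Second, and fatally for the reduction, the claim $\sum_{i \in \mathcal{L}_1} p_i^{LDM} \ge \mathrm{Rev}^{VCG}$ is refuted by the paper's own worked example: there $p_a = 0$, $p_b = -4$, $p_c = 4$, so the first layer contributes $0$ in total, while VCG restricted to the first layer collects $3$; the theorem holds in that instance only because buyer $d$ in layer~2 pays $9$. So bounding the revenue below by the first layer's payments alone cannot work: the rewards paid to first-layer inviters must be recouped from second-layer payments, and so on down the tree.

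The missing idea is a different split of each payment, $p_i^{LDM} = q_i^{LDM} - t_i^{LDM}$, where $t_i^{LDM} = \sum_{j \in C_i} \hat v_j(\pi_j^l)$ is the resale reward attributable to $i$'s children and $q_i^{LDM} = \mathcal{SW}_{-D_i} - \sum_{j \notin C_i \cup \{i\}} \hat v_j(\pi_j^l)$ is a pure charge. One then proves two inequalities: (i) $\sum_{i \in \mathcal{L}_1} q_i^{LDM} \ge \sum_{i \in \mathcal{L}_1} p_i^{VCG}$, and (ii) for every $l \ge 2$, $\sum_{i \in \mathcal{L}_l} q_i^{LDM} \ge \sum_{i \in \mathcal{L}_{l-1}} t_i^{LDM}$, i.e., the charges collected from layer $l$ cover the rewards paid out to layer $l-1$; summing over layers and noting that the last winning layer pays no rewards yields the theorem. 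Inequality (ii) is where the structure of $C_i^{\mathcal{W}}$ enters (any child who wins is outranked in first-unit value by at least $\mathcal{K}$ removed siblings), which is the role you assigned to your ``exchange/greedy'' argument but applied across adjacent layers rather than within one buyer's payment. Inequality (i) is also not the superset monotonicity of your step (3) --- as you yourself observe, one of the two terms in the VCG payment goes the wrong way --- and instead requires a case analysis comparing $\pi_i^{VCG}$ with the number of units won by $i$ together with her children under LDM. Your fallback claim $\sum_{i \in \mathcal{L}_{\ge 2}} p_i^{LDM} \ge 0$ is in fact true, but it is paired with the false first-layer claim; the correct pairing is $\sum_{i \in \mathcal{L}_{\ge 2}} p_i^{LDM} \ge \sum_{i \in \mathcal{L}_1} t_i^{LDM}$ together with $\sum_{i \in \mathcal{L}_1} p_i^{LDM} = \sum_{i \in \mathcal{L}_1} q_i^{LDM} - \sum_{i \in \mathcal{L}_1} t_i^{LDM}$.
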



 As pointed in~\cite{kawasaki2020strategy}, the implementation of reserve price $R$ in diffusion settings is to add $\mathcal{K}$ dummy buyers with value $R$ in the first layer. Those dummy buyers never get items but are considered in every social welfare maximization problem to increase the price of true buyers. Actually, theorem 3 indicates that the LDM-Tree achieves no less revenue than VCG mechanism in the first layer even if adding more buyers in the first layer. Hence, setting a reserve price will not change the revenue quantity relationship.

This observation indicates the following corollary.

\begin{corollary}
The revenue of LDM-Tree with a reserve price $R$ is no less than the revenue of the VCG mechanism in the first layer with a reserve price $R$.
\end{corollary}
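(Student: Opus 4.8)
The plan is to deduce the corollary directly from Theorem~\ref{revenue} by treating the reserve price $R$ as a structural modification of the input rather than as a new mechanism. Recall, as noted just above the statement, that a reserve price $R$ in a diffusion setting is implemented by augmenting the reported network: we add $\mathcal{K}$ \emph{dummy buyers} $b_1,\dots,b_{\mathcal{K}}$ to the first layer, each a child of the seller with no children of her own, each reporting the valuation vector $(R,0,\dots,0)$. These dummies belong to $Q(\hat\theta)$ and hence enter every constrained social-welfare maximization solved by LDM-Tree, but any unit that an optimal allocation assigns to a dummy is left unsold and the dummy pays nothing. By definition, running LDM-Tree on this augmented instance is exactly ``LDM-Tree with reserve price $R$'', and running VCG on the first layer of the augmented instance is exactly ``VCG in the first layer with reserve price $R$''.

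First I would check that the augmented instance is a legitimate input to which Theorem~\ref{revenue} applies and that the dummies behave as intended. The key observations are: (i) the dummies lie in layer $1$ and have no children, so they contribute nothing to any $C_i^{\mathcal{P}}(\hat\theta)$ or $C_i^{\mathcal{W}}(\hat\theta)$, and therefore do not alter the removed sets $\mathcal{R}_l(\hat\theta)$ or $D_i(\hat\theta)$ for any layer $l\ge 1$; (ii) consequently, relative to the un-augmented run, every optimization $\mathcal{SW}_{-\mathcal{R}_l}$ and $\mathcal{SW}_{-D_i}$ changes only through the presence of $\mathcal{K}$ extra allocatable slots of marginal value $R$ in the first layer, which is exactly the effect of a reserve price; and (iii) since the dummies never receive delivered units and pay $0$, the revenue of LDM-Tree with reserve price $R$ is the sum over the \emph{real} valid buyers of the payments $p_i = \mathcal{SW}_{-D_i} - \big(\mathcal{SW}_{-\mathcal{R}_l} - \hat v_i(\pi_i^l)\big)$ computed by Algorithm~\ref{NRM-T}, and analogously for VCG.

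Next I would invoke Theorem~\ref{revenue} on the augmented instance. Its proof compares, first-layer buyer by first-layer buyer, the payment charged by LDM-Tree with the payment charged by VCG restricted to the first layer, and this comparison does not use the identity or number of the buyers populating the first layer; in particular it remains valid when the first layer contains the $\mathcal{K}$ dummies. Hence the revenue of LDM-Tree is at least the revenue of VCG in the first layer on the augmented instance, which is precisely the inequality asserted by the corollary.

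The main obstacle is verifying points (ii) and the robustness of Theorem~\ref{revenue}'s proof: one must confirm that nothing in that proof, nor in the bookkeeping of $\mathcal{K}^{remain}$ in Algorithm~\ref{NRM-T} (when a dummy is assigned a unit in some optimal solution), secretly relies on first-layer participants being ordinary, unconstrained, deliverable buyers. Once this is established, the corollary follows with no further computation, and the remark preceding the statement is essentially this verification.
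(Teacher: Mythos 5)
Your proposal matches the paper's own argument: the paper likewise justifies the corollary by implementing the reserve price as $\mathcal{K}$ dummy buyers of value $R$ added to the first layer and observing that Theorem~\ref{revenue} continues to hold with extra first-layer buyers, so the revenue comparison is unaffected. Your additional verification that the dummies do not perturb the removed sets $\mathcal{R}_l$ or $D_i$ is a sound (and slightly more careful) elaboration of the same route.
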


Note that there is no restriction on the reserve prices. In single-item auction, if buyers' valuation distribution is symmetric and regular, the optimal mechanism is a second price auction with a reserve price related to the distribution~\cite{myerson1981optimal}. Thus in the single-item Bayesian setting with symmetric and regular valuation distributions, LDM-Tree with reserve prices achieves better revenue than optimal auction without diffusion. For more general settings, VCG mechanism with reserve price also achieves a good approximation ratio to the optimal auction~\cite{hartline2009simple}, which also gives the approximation guarantee of LDM-Tree to the optimal mechanism without diffusion.

Here we clarify why we choose the VCG mechanism without diffusion as the benchmark rather than the optimal social welfare and revenue. It has been shown that in diffusion auction mechanism design, if IC, IR and weak budget balance are satisfied simultaneously, it is impossible to achieve efficiency at the same time~\cite{ZhangZC20}. Moreover, the loss of efficiency is unbounded. On the other hand, revenue maximization in diffusion settings has not been studied so far. Thus, following the previous work, we compare the social welfare and revenue with VCG mechanism without diffusion to attract the seller to apply the mechanism.

Although LDM-Tree relies on the prior knowledge of $\mu$, we point out that no properties will be affected if $\mu$ is overestimated. We can choose a rather large value if the prior is missing to guarantee diffusion IC and the allocation degenerates to VCG without diffusion. Moreover, $\mu$ is only related to the structure of the network, we have no restrictions on buyers' valuations. In the next section, we will extend the LDM-Tree to general graphs.

\section{Diffusion Auction Mechanism on Graphs}
In practice, most networks are general graphs. Similar to the tree network, given a report profile $\hat{\theta}$, a graph $\mathcal{G}(\hat{\theta}) = (V(\hat{\theta}) \cup \{s\}, E(\hat{\theta}))$ can be constructed where seller $s$ is the source of this graph and $V(\hat{\theta})$ is the set of valid buyers. For each $i \in V(\hat{\theta}) \cup \{s\}$ and her neighbor $j \in \hat{r}_i$, there is an edge $(i,j) \in E(\hat{\theta})$. 

LDM-Tree only works on tree-structure networks. To extend it to general graphs, we can transform a given graph $\mathcal{G}(\hat{\theta})$ into its corresponding breadth-first search tree(BFS tree) $\mathcal{T}^{BFS}(\hat{\theta})$ and run LDM-Tree on $\mathcal{T}^{BFS}(\hat{\theta})$. We will show such transformation can retain all properties of LDM-Tree. The transformation framework is given in Algorithm 3.

\begin{breakablealgorithm}
\caption{}
\begin{algorithmic}[1]
    \REQUIRE A report profile $\hat{\theta}$ and a diffusion mechanism $\mathcal{M}$ for trees;
	\ENSURE  $\pi(\hat{\theta})$ and $p(\hat{\theta})$;
	
	\STATE Construct the graph $\mathcal{G}(\hat{\theta})$;
	\STATE Run breadth-first search on $\mathcal{G}(\hat{\theta})$ in alphabetical order to form  $\mathcal{T}^{BFS}(\hat{\theta})$ ;
	\STATE Run $\mathcal{M}$ on $\mathcal{T}^{BFS}(\hat{\theta})$;
	\STATE Return $\pi_i(\hat{\theta})$ and $p_i(\hat{\theta})$ for each buyer $i$.
\end{algorithmic}
\label{GTT}
\end{breakablealgorithm}

If the input mechanism $\mathcal{M}$ for Algorithm~\ref{GTT} is the LDM-Tree, then we call the whole algorithm LDM. Now, let's run the LDM on an example shown in Figure \ref{figGraph}. Figure \ref{figGraph} is a graph network. LDM first transforms it into its BFS tree. The BFS tree is exactly the same as the tree in Figure \ref{figTree}. LDM then runs LDM-Tree on the BFS tree to find the allocation and payment for each buyer which are shown in Figure \ref{figTree}. 

\begin{figure}[htbp]
    \centering
    \includegraphics[scale=0.4]{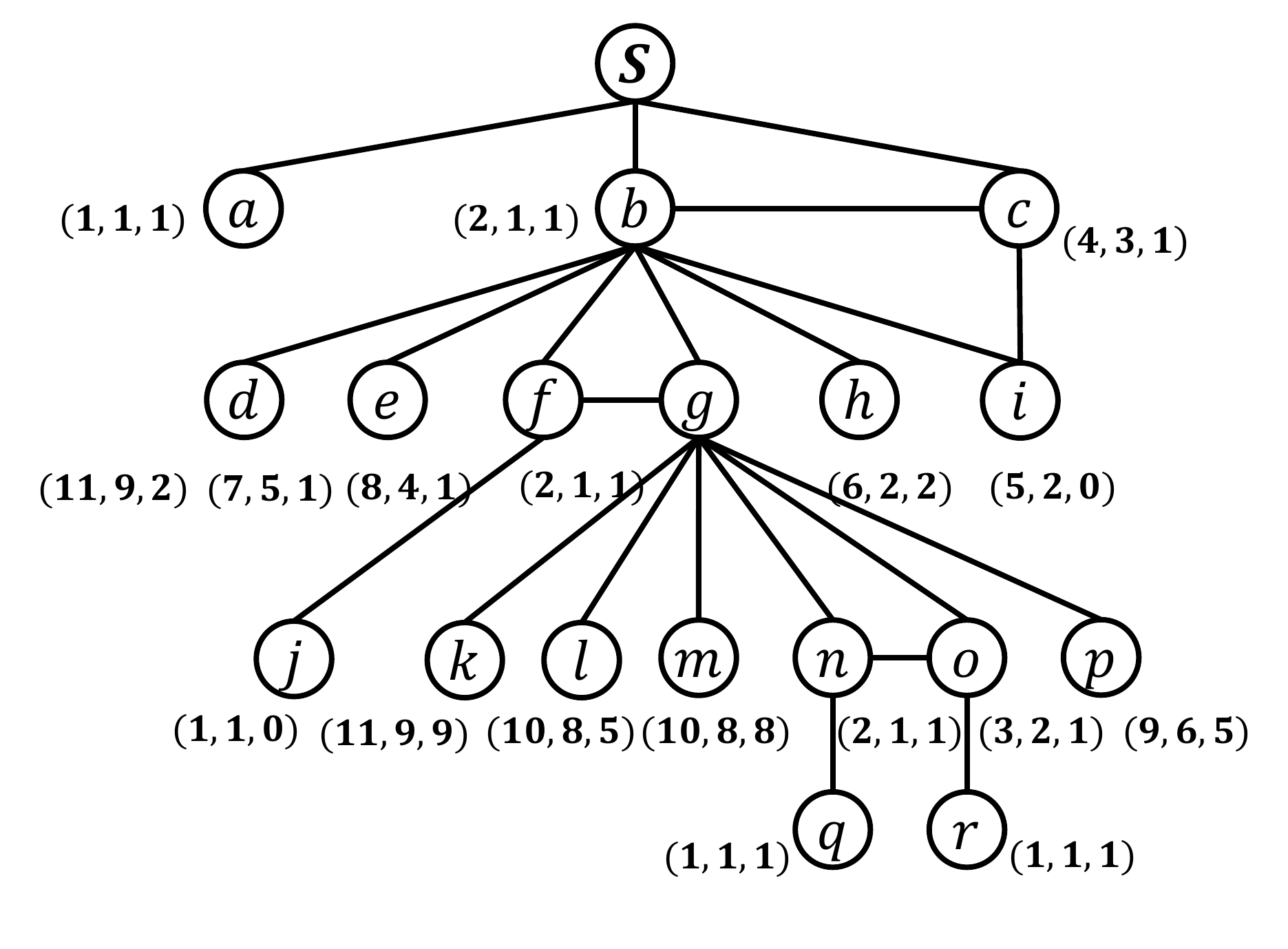}
    \caption{A graph example}
    \label{figGraph}
\end{figure}



In order to ensure that Algorithm \ref{GTT} is incentive compatible, the input mechanism $\mathcal{M}$ should be incentive compatible first to ensure truthful valuation report of each buyer. Besides, it should satisfy that each buyer's utility is non-increasing when another buyer in the same layer has more children to guarantee invitation incentive. Let $u_i(C_j)$ be the utility of $i$ when the children of $j$ is $C_j$. The following theorem summarizes the above discussion, and its proof is given in the appendix \ref{G-to-T proof}.

\begin{theorem} \label{t5}
If the input mechanism $\mathcal{M}$ for Algorithm~\ref{GTT} is IC and for any two buyers $i,j \in \mathcal{L}_l$, $u_i(\hat{C}_j) \ge u_i(C_j)$ where $\hat{C}_j \subseteq C_j$, then Algorithm~\ref{GTT} is IC.
\end{theorem}

Based on Theorem \ref{t5}, we can finally know LDM has all desired properties. The detailed proof of Corollary \ref{LDM} is postponed to the appendix \ref{G-to-T proof}.
\begin{corollary}
The LDM on graphs has the following properties. (1) Incentive Compatible. (2) Individually Rational. (3) Non-wasteful. (4) The social welfare/revenue of LDM is no less than the social welfare/revenue of the VCG mechanism in the first layer
\label{LDM}
\end{corollary}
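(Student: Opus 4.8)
The plan is to obtain the corollary by combining the properties already established for LDM-Tree with the transformation result of Theorem~\ref{t5}, after making one structural observation about the breadth-first search step. Running BFS on the reported graph $\mathcal{G}(\hat{\theta})$ deletes no reachable vertex, so $V(\mathcal{T}^{BFS}(\hat{\theta})) = Q(\hat{\theta})$; moreover the depth of a vertex in $\mathcal{T}^{BFS}(\hat{\theta})$ equals its shortest-path distance from $s$ in $\mathcal{G}(\hat{\theta})$, so the layer partition is unchanged and, in particular, layer~$1$ of $\mathcal{T}^{BFS}(\hat{\theta})$ is exactly $r_s$.

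With this in hand, three of the four claims transfer immediately. For \textbf{individual rationality}, a graph report $\bigl((v_i,\hat{r}_i),\hat{\theta}_{-i}\bigr)$ yields some BFS tree on which buyer $i$ still reports her valuation truthfully, so Theorem~\ref{t1} applied on $\mathcal{T}^{BFS}(\hat{\theta})$ gives $u_i \geq 0$. For \textbf{non-wastefulness}, since $V(\mathcal{T}^{BFS}(\hat{\theta}))=Q(\hat{\theta})$ is nonempty whenever $N$ is, the non-wastefulness proposition for LDM-Tree gives $\sum_i \pi_i(\hat{\theta}) = \mathcal{K}$. For the \textbf{social welfare and revenue bounds}, because layer~$1$ of the BFS tree coincides with $r_s$, the ``VCG in the first layer'' benchmark is literally the same mechanism in both settings, so Proposition~\ref{t3} and Theorem~\ref{revenue} applied on $\mathcal{T}^{BFS}(\hat{\theta})$ give the desired inequalities for LDM.

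The remaining and central claim is \textbf{incentive compatibility}, which I would derive from Theorem~\ref{t5}. That theorem reduces IC of Algorithm~\ref{GTT} to two hypotheses on $\mathcal{M}$: (i) $\mathcal{M}$ is IC, which for LDM-Tree is Theorem~\ref{t2}; and (ii) for any two buyers $i,j$ in the same layer, $u_i(\hat{C}_j) \geq u_i(C_j)$ whenever $\hat{C}_j \subseteq C_j$, i.e.\ a buyer is never hurt when a same-layer buyer reveals fewer children. So the proof boils down to verifying (ii) for LDM-Tree. Here I would unfold $u_i = \mathcal{SW}_{-\mathcal{R}_l} - \mathcal{SW}_{-D_i}$ (or $0$) and track how both terms move when $j$'s children set shrinks from $C_j$ to $\hat{C}_j$: this changes only the component $C_j^{\mathcal{R}}$ of $\mathcal{R}_l$ (and, through it, $D_i = \mathcal{R}_l\cup C_i\cup\{i\}$), with the fixed lower-layer allocations $\pi^q$ recomputed on the correspondingly smaller instances. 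The structural point to exploit is that $\mathcal{R}_l$ is built to contain \emph{all} potential competitors of layer~$l$ with the $\mu$-buffer, so that removing $j$'s extra children cannot turn a previously-removed competitor into an active one; combined with monotonicity of the greedy welfare-maximizing allocation in the available buyer set, this should force $\mathcal{SW}_{-\mathcal{R}_l}$ and $\mathcal{SW}_{-D_i}$ to move so that their difference is non-increasing in $|C_j|$.

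I expect step~(ii) to be the main obstacle: unlike IR, non-wastefulness and the welfare/revenue bounds, which are essentially bookkeeping on top of the BFS observation, the monotonicity of $u_i$ in a sibling-layer buyer's children requires a genuine case analysis of the removed sets --- in particular handling the interaction between $C_j^{\mathcal{P}}$ (which shrinks) and $C_j^{\mathcal{W}}$ (whose size $\mathcal{K}+\mu-|C_j^{\mathcal{P}}|$ may grow while its pool shrinks), and confirming that the $\mu$-buffer indeed prevents any ``switch'' between the two sets from helping $i$. Once (ii) is in place, Theorem~\ref{t5} closes the IC claim and the corollary follows.
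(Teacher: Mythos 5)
Your decomposition is exactly the paper's: transfer IR, non-wastefulness, and the welfare/revenue bounds directly from Theorems~\ref{t1}, \ref{revenue} and the two propositions (since LDM literally runs LDM-Tree on $\mathcal{T}^{BFS}(\hat{\theta})$, whose layer partition and first layer coincide with those of $\mathcal{G}(\hat{\theta})$), and then obtain IC from Theorem~\ref{t5} by verifying its second hypothesis for LDM-Tree. The first half of your argument is correct and complete.

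The gap is that the second half --- verifying $u_i(\hat{C}_j)\ge u_i(C_j)$ for $\hat{C}_j\subseteq C_j$ with $i,j\in\mathcal{L}_l$ --- is exactly where the content of this corollary lives, and you leave it at ``this should force the difference to be non-increasing.'' You also track the wrong object: what matters for $i$ is not $C_j^{\mathcal{R}}$ itself but its complement $C_j^{out}=C_j\setminus C_j^{\mathcal{R}}$, the children of $j$ who \emph{survive} the removal and therefore appear in both optimizations $\mathcal{SW}_{-\mathcal{R}_l}$ and $\mathcal{SW}_{-D_i}$. The paper's argument has two concrete steps you would still need to supply. First, when $j$ gains one child, $|C_j^{out}|$ increases by at most one: if the new child lands in $C_j^{\mathcal{P}}$ the quota $\mathcal{K}+\mu-|C_j^{\mathcal{P}}|$ for $C_j^{\mathcal{W}}$ shrinks by one and releases one buyer into $C_j^{out}$, and if it lands outside $C_j^{\mathcal{P}}$ it either fills the $C_j^{\mathcal{W}}$ quota or joins $C_j^{out}$ directly --- this is precisely the $\mu$-buffer case analysis you flagged but did not carry out. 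Second, writing $u_i=v_i^{get}-\sum_{k=\mathcal{K}-m+1}^{\mathcal{K}}v_{Q\setminus D_i}^{k}$ with $m=\pi_i^l+\sum_{q\in C_i}\pi_q^l$, one extra survivor in $Q\setminus D_i$ weakly raises the subtracted top-$\mathcal{K}$ tail and weakly lowers $m$; a short case split on whether $m$ drops or stays fixed shows $u_i$ cannot increase. Note also that the lower-layer allocations $\pi^q$ ($q<l$) are \emph{not} ``recomputed on smaller instances'' as you suggest --- they are unchanged, because $\mathcal{R}_q$ already removes all of $\mathcal{L}_{\ge q+2}$ and hence all of $j$'s children; without this observation the two optimizations would not even share constraints and the comparison would not go through.
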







From Theorem \ref{t5}, if we want to design an IC diffusion mechanism on graphs, it is enough to design an IC diffusion mechanism on trees with one additional property. Moreover, the properties related to individual rationality, social welfare and revenue of Algorithm \ref{GTT} are the same as its input mechanism $\mathcal{M}$. Since the tree structure is much simpler than graphs and the additional property is natural, Theorem \ref{t5} can reduce the difficulty of designing diffusion mechanisms on graphs, which may allow us to handle more intricate settings in diffusion mechanism design.

\section{Conclusions}
We designed the layer-based diffusion mechanism (LDM) for multi-unit auctions with diminishing marginal utility buyers. In the LDM, buyers are classified into different layers according to their shortest distance to the seller. The LDM will compute allocation and payment layer by layer based on a constrained optimization problem of social welfare. The LDM can incentivize buyers to invite their neighbors to join the auction and truthfully report their valuations. We also showed that the seller can get higher revenue by using LDM than only using VCG (with reserve price) in the first layer. It is the very first diffusion mechanism to satisfy such desirable properties for multi-unit-demand settings. 

In addition, we show the effectiveness of BFS tree in diffusion mechanism design. With BFS tree, designing IC mechanisms on trees with one extra natural property is sufficient to design IC mechanisms on graphs. We argue that this transformation can simplify the design of diffusion mechanism on graphs.

There are many interesting future directions to do based on LDM. We proposed a general framework to design diffusion auctions for buyers with diminishing marginal utilities. The only thing we need to optimize is $C_i^{\mathcal{R}}$. It should contain all buyers with positive potential utility and buyers in $C_i^{\mathcal{R}}$ should have no incentive to change this set. All $C_i^{\mathcal{R}}$ satisfies above two properties will lead to an IC, IR mechanism with Algorithm $1$. One trivial method is to set $C_i^{\mathcal{R}} = C_i$, which is equivalent to VCG mechanism in the first layer. We proposed a nontrivial method to design $C_i^{\mathcal{R}}$ to achieve better social welfare and revenue than the trivial case, but it relies on one additional prior knowledge $\mu$. Designing diffusion mechanisms without $\mu$ but still have social welfare and revenue improvements is an interesting future work, which requires a refined understanding of potential competitors. Moreover, LDM fails to be extended to more general combinatorial settings. Finding new methods to design diffusion combinatorial auctions will be the ultimate goal of this research line.



\bibliographystyle{named} 
\bibliography{ijcai23}

\newpage
\noindent \textbf{\huge{Appendix}}
~\\
\begin{appendix}
\noindent In the following analysis, we will use the notations given in the notation table.
\begin{center}
\begin{tabular}{ | m{3.5em} | m{18em}| } 
  \hline
  \textbf{Notation} & \textbf{Definition}\\ 
  \hline
  $C_i^{\mathcal{P}}(\hat{\theta})$ & The children of $i$ who have children.\\
  \hline
 $\mu$ & A pre-known upper bound for $C_i^{\mathcal{P}}(\hat{\theta})$. $(\text{i.e.} \max_i |C_i^{\mathcal{P}}(\hat{\theta})|\le \mu)$\\ 
  \hline
   $C_i^{\mathcal{W}}(\hat{\theta})$ &  The top $\mathcal{K} + \mu - |C_i^{\mathcal{P}}(\hat{\theta})|$ ranked buyers in $C_i(\hat{\theta}) \setminus C_i^{\mathcal{P}}(\hat{\theta})$ according their valuation report for the first unit.  \\ 
  \hline
  $C_i^{\mathcal{R}}(\hat{\theta})$ & $C_i^{\mathcal{R}}(\hat{\theta}) = C_i^{\mathcal{W}}(\hat{\theta}) \cup C_i^{\mathcal{P}}(\hat{\theta})$ which is the total removed set for buyer $i$.\\
  \hline
  $\mathcal{R_{\emph{l}}}(\hat{\theta})$ & $\left(\bigcup_{i \in \mathcal{L_\emph{l}}(\hat{\theta})} C_i^{\mathcal{R}} (\hat{\theta}) \right) \cup \left( \bigcup_{\emph{l}+2 \le d \le l^{max}} \mathcal{L_{\emph{d}}}(\hat{\theta}) \right)$ which is the total removed set when considering the allocation of layer $l$\\
  \hline
    $D_i(\hat{\theta})$ & $D_i(\hat{\theta}) = \mathcal{R}_l(\hat{\theta}) \cup C_i(\hat{\theta}) \cup \{i\}$ which is removed to exclude all influence of buyer $i$.\\
  \hline
\end{tabular}
\end{center}

\section{Missing proof in Section 4}
\subsection{Proofs Related to Incentive Compatibility}\label{app:IC proof}
We first show the following property that buyers with positive potential utility have no incentive to change the removed set by misreporting valuations.
\begin{observation}
For any buyer $j$ and $i \in r_j$, given any $\hat{\theta}_{-i}$, if $i \in C_j^{\mathcal{R}}((v_i , \hat{r}_i), \hat{\theta}_{-i})$ and $i \in C_j^{\mathcal{R}}((\hat{v}_i , \hat{r}_i), \hat{\theta}_{-i})$, then $C_j^{\mathcal{R}}((\hat{v}_i , \hat{r}_i), \hat{\theta}_{-i})= C_j^{\mathcal{R}}((v_i , \hat{r}_i), \hat{\theta}_{-i})$.
\label{O1}
\end{observation}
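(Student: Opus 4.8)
\textbf{Proof plan for Observation~\ref{O1}.}

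The plan is to show that the only way buyer $i$'s membership in $C_j^{\mathcal{R}}$ could be realized under two different valuation reports while the set itself changes is if $i$ moves between the two sub-parts $C_j^{\mathcal{P}}$ and $C_j^{\mathcal{W}}$, or if some other buyer's rank crosses $i$'s — and both are impossible. First I would fix $\hat{\theta}_{-i}$ and write $C_j(\hat{\theta})$ for the (fixed) set of children of $j$; note this set does not depend on $i$'s valuation report, only on $i$'s reported neighbor set $\hat{r}_i$, which is held fixed at $\hat{r}_i$ in both scenarios. The crucial structural fact is that $C_j^{\mathcal{P}}(\hat{\theta})$ — the children of $j$ who themselves have children — depends only on the reported network structure, not on any valuations, so $C_j^{\mathcal{P}}$ is identical under $(v_i,\hat{r}_i,\hat{\theta}_{-i})$ and $(\hat{v}_i,\hat{r}_i,\hat{\theta}_{-i})$. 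In particular, whether $i \in C_j^{\mathcal{P}}$ is the same in both scenarios: either $i$ is in $C_j^{\mathcal{P}}$ under both (in which case trivially $C_j^{\mathcal{R}}$ differs only possibly in its $C_j^{\mathcal{W}}$ part), or $i$ is in neither, i.e. $i \in C_j(\hat{\theta}) \setminus C_j^{\mathcal{P}}(\hat{\theta})$ under both.

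Next I would handle the two cases. Case (a): $i \in C_j^{\mathcal{P}}$ under both reports. Then $|C_j^{\mathcal{P}}|$ is fixed, so the cardinality $\mathcal{K}+\mu-|C_j^{\mathcal{P}}|$ defining $C_j^{\mathcal{W}}$ is fixed, and $C_j^{\mathcal{W}}$ is determined by the top-ranked buyers in $C_j(\hat{\theta})\setminus C_j^{\mathcal{P}}(\hat{\theta})$ according to their \emph{own} reported first-unit valuations. Since $i \notin C_j(\hat{\theta})\setminus C_j^{\mathcal{P}}(\hat{\theta})$, changing $i$'s valuation report does not change any of the valuations being ranked here, so $C_j^{\mathcal{W}}$ is identical in both scenarios, and hence $C_j^{\mathcal{R}} = C_j^{\mathcal{P}} \cup C_j^{\mathcal{W}}$ is identical. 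Case (b): $i \in C_j(\hat{\theta})\setminus C_j^{\mathcal{P}}(\hat{\theta})$ under both reports, and by hypothesis $i \in C_j^{\mathcal{W}}$ under both. Again $|C_j^{\mathcal{P}}|$ and hence the target cardinality $t := \mathcal{K}+\mu-|C_j^{\mathcal{P}}|$ is the same. The set $C_j^{\mathcal{W}}$ is the top-$t$ ranked buyers in the pool $P := C_j(\hat{\theta})\setminus C_j^{\mathcal{P}}(\hat{\theta})$, which is a fixed set of buyers whose first-unit reports are all fixed \emph{except} for $i$'s. Since $i$ is among the top-$t$ under both reports, the set of buyers strictly above $i$ in the ranking is a subset of $C_j^{\mathcal{W}}\setminus\{i\}$ in both cases; as those buyers' reports are unchanged and $i$'s own rank position relative to them is all that varies, the set $\{i\} \cup \{$buyers in $P$ ranked above $i\}$ together with enough of the remaining buyers to reach size $t$ is forced to be the same collection — more carefully, the complement $P \setminus C_j^{\mathcal{W}}$ consists of buyers ranked below $i$, whose valuations are fixed, and its size $|P|-t$ is fixed, so it is the same set, hence $C_j^{\mathcal{W}}$ is the same. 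Tie-breaking must be specified consistently (e.g. by buyer index), but since $i$'s index is fixed and all other reports are fixed, ties resolve identically.

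The main obstacle I anticipate is the bookkeeping in Case (b): one must argue carefully that even though $i$'s relative position within the top-$t$ block can shift, the \emph{set} of buyers constituting $C_j^{\mathcal{W}}$ cannot change, because any buyer who would enter or leave $C_j^{\mathcal{W}}$ as a result would have to swap rank with $i$, but $i$ stays inside the top-$t$ block in both scenarios, so no such swap across the top-$t$ boundary occurs. Formalizing this cleanly may require observing that, with $i$'s report held at a value keeping $i$ in $C_j^{\mathcal{W}}$, the top-$t$ set equals $\{i\}$ together with the top-$(t-1)$ of $P\setminus\{i\}$ by first-unit report — and this latter set does not involve $i$'s report at all, hence is identical across the two scenarios. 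That reduction is the cleanest route and I would lead with it, making Case (b) essentially immediate and Case (a) even more so.
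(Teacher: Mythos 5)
Your proposal is correct and follows essentially the same route as the paper's proof: a case split on whether $i$ lies in $C_j^{\mathcal{P}}$ or in $C_j^{\mathcal{W}}$, using that $C_j^{\mathcal{P}}$ and hence the cutoff $\mathcal{K}+\mu-|C_j^{\mathcal{P}}|$ depend only on the (fixed) reported network structure. Your Case (b) reduction --- that the top-$t$ set is $\{i\}$ together with the top-$(t-1)$ of the remaining pool, which is independent of $i$'s valuation report --- is a cleaner justification of a step the paper simply asserts.
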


\begin{proof}
\textbf{Case 1.} If $i \in C_j^{\mathcal{P}}$ when $\hat{\theta}_i = (v_i, \hat{r}_i)$, $i$ is still in $C_j^{\mathcal{P}}$ after changing report type to $(\hat{v}_i, \hat{r}_i)$. Then all sets remain the same and $C_j^{\mathcal{R}}$ will not change.

\textbf{Case 2.} If $i \in C_j^{\mathcal{W}}$ when $\hat{\theta}_i = (v_i, \hat{r}_i)$. Then after changing type to $(\hat{v}_i, \hat{r}_i)$, $i$ can not be in $C_j^{\mathcal{P}}$. Given $i \in C_j^{\mathcal{R}}(\hat{v}_i , \hat{r}_i)$, $i \in C_j^{\mathcal{W}}(\hat{v}_i , \hat{r}_i)$. Then all sets remain the same and $C_j^{\mathcal{R}}$ will not change.
\end{proof}

Using Observation \ref{O1}, in Lemma \ref{l3}, we prove that in LDM-Tree, reporting truthfully is always a dominant strategy for every agent no matter how many neighbors she diffuses information to.
\begin{lemma} \label{l3}
Given any $\hat{\theta}_{-i}$ and fix the invitation of $i$ to be $\hat{r}_i$, we have $u_i((v_i, \hat{r}_i), \hat{\theta}_{-i}) \ge u_i((\hat{v}_i, \hat{r}_i), \hat{\theta}_{-i})$ for any $\hat{v}_i$.
\end{lemma}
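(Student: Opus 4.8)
The plan is to fix $l=d_i(\hat\theta)$ — which, like the validity of $i$, is pinned down by the invitation structure and so is the same under every report $\hat v_i$ (if $i$ is invalid the claim is trivial) — and to let $j\in\mathcal L_{l-1}$ be $i$'s parent when $l\ge 2$. Both branches of the payment rule combine into the single identity $u_i(\hat\theta)=v_i(\pi_i^l(\hat\theta))-\hat v_i(\pi_i^l(\hat\theta))+\mathcal{SW}_{-\mathcal R_l}(\hat\theta)-\mathcal{SW}_{-D_i}(\hat\theta)$. Because $i\notin\mathcal R_l(\hat\theta)$, the summand $\hat v_i(\pi_i^l)$ occurs inside $\mathcal{SW}_{-\mathcal R_l}$; cancelling it, and then cancelling the identical ``layers $<l$'' parts of $\mathcal{SW}_{-\mathcal R_l}$ and $\mathcal{SW}_{-D_i}$, rewrites $u_i(\hat\theta)$ purely in terms of the layer-$l$ subproblem. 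A Groves-type argument (the welfare-optimal allocation computed from the report $\hat v_i$ is feasible for the objective that plugs in $i$'s true $v_i$) then yields
\[
u_i((\hat v_i,\hat r_i),\hat\theta_{-i})\ \le\ g\big(\mathcal K_l^{\hat v_i}\big),\qquad u_i((v_i,\hat r_i),\hat\theta_{-i})\ =\ g\big(\mathcal K_l^{v_i}\big),
\]
where $\mathcal K_l^{w}$ denotes the number of units still unallocated after layers $<l$ under report $w$, and $g(\kappa)$ is the marginal contribution — evaluated at $i$'s \emph{true} valuation — of $i$ (together with such of its descendants as survive in layer $l$'s pool) to the optimal welfare of layer $l$'s report-independent pool when $\kappa$ units are available.

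So it suffices to compare $g(\mathcal K_l^{\hat v_i})$ with $g(\mathcal K_l^{v_i})$, and the only way $\hat v_i$ can influence the picture is through $\mathcal K_l$, i.e.\ through the allocation committed to layer $l-1$. This committed allocation can depend on $\hat v_i$ only when $i$ is a leaf, and then only through whether $i$ lies in its parent's winner set $C_j^{\mathcal W}(\hat\theta)$ (the set $C_j^{\mathcal P}$ of $j$'s children who have children is unaffected once $i$'s invitation is fixed). If $i$ has children, or if $i$'s membership in $C_j^{\mathcal W}$ is the same under $(v_i,\hat r_i)$ and $(\hat v_i,\hat r_i)$, then by Observation~\ref{O1} (together with the trivial fact that the ``top $\mathcal K+\mu-|C_j^{\mathcal P}|$ by first-unit value'' set does not change when $i$'s membership does not flip) all of $\mathcal R_{l-1},\mathcal R_l$, the committed layer-$(l-1)$ allocation, and $D_i$ coincide under the two reports; hence $\mathcal K_l^{\hat v_i}=\mathcal K_l^{v_i}$, so $g(\mathcal K_l^{\hat v_i})=g(\mathcal K_l^{v_i})$ and the inequality follows.

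The remaining case is that $i$ is a leaf lying \emph{outside} $C_j^{\mathcal W}$ under at least one of the two reports, and here I would lean on the single structural property the mechanism is designed around: $|C_j^{\mathcal W}|=\mathcal K+\mu-|C_j^{\mathcal P}|\ge\mathcal K\ge\mathcal K_l$, and every member of $C_j^{\mathcal W}$ belongs to layer $l$'s pool (a leaf-child of $j$ never lands in $\mathcal R_l$). Consequently, whenever $i$ is outside $C_j^{\mathcal W}$ there are at least $\mathcal K_l$ buyers in layer $l$'s pool, all different from $i$, whose first-unit value weakly exceeds \emph{every} marginal of the valuation $i$ is currently assessed with. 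Two sub-cases then close the argument. First, if $i\notin C_j^{\mathcal W}$ under the misreport, this forces $\pi_i^l(\hat\theta)=0$, so $u_i((\hat v_i,\hat r_i),\hat\theta_{-i})=0\le u_i((v_i,\hat r_i),\hat\theta_{-i})$ by individual rationality (Theorem~\ref{t1}). Second, if $i\in C_j^{\mathcal W}$ under the misreport but $i\notin C_j^{\mathcal W}$ under the truthful report, then applying the property on the truthful side shows that, no matter how the deviation perturbs $\mathcal K_l$, at least $\mathcal K_l$ buyers in layer $l$'s pool dominate $i$'s \emph{true} valuation, so $g(\mathcal K_l^{\hat v_i})=0=g(\mathcal K_l^{v_i})$ and hence $u_i((\hat v_i,\hat r_i),\hat\theta_{-i})\le 0=u_i((v_i,\hat r_i),\hat\theta_{-i})$.

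The delicate part — the one I would write out in full — is this last case: one has to keep careful track of how $\hat v_i$ slides $i$ between $C_j^{\mathcal W}$ and the layer-$(l-1)$ optimization pool, how that movement changes the number of units $\mathcal K_l$ that reach layer $l$, and why the $\mathcal K$-sized ``cushion'' of removed high-value leaf-children of $j$ always leaves enough competitors inside layer $l$'s pool to neutralise the deviation. This is precisely the manufactured-competition-among-closer-buyers effect that sank GIDM and DNA-MU, so the bookkeeping around $\mathcal K_l$ and a single tie-breaking rule shared by the $C_j^{\mathcal W}$-selection and the welfare maximisation (so that the ``wins $0$'' conclusions are exact) is where I expect the real work to be.
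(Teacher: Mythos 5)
Your proof follows essentially the same route as the paper's: the same unified utility identity $u_i=v_i(\pi_i^l)-\hat v_i(\pi_i^l)+\mathcal{SW}_{-\mathcal{R}_l}-\mathcal{SW}_{-D_i}$, the same Groves-style optimality bound, the invariance of the removed sets (Observation~\ref{O1}) for the case where $i$ stays in $C_j^{\mathcal{R}}$, and the ``at least $\mathcal{K}$ removed siblings dominate $i$'' cushion for the boundary cases. The paper organizes the argument as two cases on whether $i\in C_j^{\mathcal{R}}$ under the \emph{truthful} report, but the ingredients are identical.

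One step does not hold as stated: in the sub-case where $i$ is a leaf with $i\notin C_j^{\mathcal{W}}$ under \emph{both} reports, you assert that the committed layer-$(l-1)$ allocation, hence $\mathcal{K}_l$, is unchanged. That does not follow from Observation~\ref{O1} (whose hypothesis is $i\in C_j^{\mathcal{R}}$ under both reports), nor from the pool $Q\setminus\mathcal{R}_{l-1}$ being unchanged: when $i\notin C_j^{\mathcal{R}}$, buyer $i$ is a \emph{member} of the stage-$(l-1)$ optimization pool, so $\hat v_i$ enters the objective that determines $\pi^{l-1}$, and by inflating or deflating her report $i$ can draw units away from, or release them to, layer $l-1$ in that optimization --- changing how many units are committed there and hence $\mathcal{K}_l$. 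The conclusion nevertheless survives: in this sub-case the truthful report leaves at least $\mathcal{K}\ge\mathcal{K}_l$ siblings in layer $l$'s pool whose first-unit value exceeds every true marginal of $i$, so your $g(\kappa)$ vanishes for every $\kappa\le\mathcal{K}$ and $u_i((\hat v_i,\hat r_i),\hat\theta_{-i})\le 0=u_i((v_i,\hat r_i),\hat\theta_{-i})$ no matter how $\mathcal{K}_l$ moves. This is exactly the domination argument you already deploy in your second sub-case, and it is how the paper handles the entire ``$i\notin C_j^{\mathcal{R}}$ truthfully'' case, which is why no bookkeeping on $\mathcal{K}_l$ is ever needed there. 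With that one justification replaced, your proof is correct and matches the paper's.
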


\begin{proof}
Suppose $i \in \mathcal{L_{\emph{l}}}$ and her parent is $j$.

\textbf{Case 1.}  $i \in C_j^{ \mathcal{R}}$ when $i$ reports truthfully.

$i$ has no incentive to misreport to be in $C_j \setminus C_j^{\mathcal{R}}$ because in that case $u_i = 0$ after misreporting. If $i$ misreports to $\hat{v}_i$ and still in $C_j^{\mathcal{R}}$, from Observation 1, buyers in $ C_j \setminus C_j^{\mathcal{R}}$ will not change which indicates that for $l > 1$ and $1 \le d \le l-1$, allocation $\pi^{d}$  will not change. Thus $\mathcal{SW}_{-\mathcal{R_{\emph{l}}}}(v_i, \hat{r}_i)$ and $\mathcal{SW}_{-\mathcal{R_{\emph{l}}}}(\hat{v}_i, \hat{r}_i)$ are computed under the same constraints and so do $\mathcal{SW}_{-D_i}(v_i, \hat{r}_i)$ and $\mathcal{SW}_{-D_i}(\hat{v}_i, \hat{r}_i)$. When computing $\mathcal{SW}_{-D_i}$, buyer $i$ is removed and we have shown the constraints are independent of $\hat{v}_i$. Thus $\mathcal{SW}_{-D_i}$ is a constant given $\hat{\theta}_{-i}$. If $i$ reports truthfully, no matter what the $\pi_i^l(v_i, \hat{r}_i)$ is, $u_i(v_i, \hat{r_i}) = \mathcal{SW}_{-\mathcal{R_{\emph{l}}}}(v_i, \hat{r}_i) - \mathcal{SW}_{-D_i}$. 

If $i$ misreports valuation and $\pi_i^l(\hat{v}_i, \hat{r}_i) = \pi_i^l(v_i, \hat{r}_i)$, then $u_i(v_i, \hat{r}_i) = u_i(\hat{v}_i, \hat{r}_i)$. If $i$ misreports and $\pi_i^l(\hat{v}_i, \hat{r}_i) \neq \pi_i^l(v_i, \hat{r}_i)$,  then $u_i(\hat{v}_i, \hat{r_i}) = v_i(\pi_i^l(\hat{v}_i, \hat{r}_i)) - (\mathcal{SW}_{-D_i} - (\mathcal{SW}_{-\mathcal{R_{\emph{l}}}}(\hat{v}_i, \hat{r}_i) - \hat{v}_i(\pi_i^l(\hat{v}_i, \hat{r}_i))) =  v_i(\pi_i^l(\hat{v}_i, \hat{r}_i)) + \mathcal{SW}_{-\mathcal{R_{\emph{l}}}}(\hat{v}_i, \hat{r}_i) - \hat{v}_i(\pi_i^l(\hat{v}_i, \hat{r}_i)) - \mathcal{SW}_{-D_i}$.

If $v_i(\pi_i^l(\hat{v}_i, \hat{r}_i)) + \mathcal{SW}_{-\mathcal{R_{\emph{l}}}}(\hat{v}_i, \hat{r}_i) - \hat{v}_i(\pi_i^l(\hat{v}_i, \hat{r}_i)) \ge \mathcal{SW}_{-\mathcal{R_{\emph{l}}}}(v_i, \hat{r}_i)$, then it contradicts to that $\mathcal{SW}_{-\mathcal{R_{\emph{l}}}}(v_i, \hat{r}_i)$ is optimal. Thus $u_i(v_i, \hat{r_i}) \ge u_i(\hat{v}_i, \hat{r_i})$.

\textbf{Case 2.}  $i \notin C_j^{\mathcal{R}}$ when $i$ reports truthfully.

Since $i \notin  C_j^{\mathcal{R}}$ when reporting truthfully, $\pi_i^l(v_i, \hat{r}_i) = 0$ and $u_i(v_i, \hat{r}_i) = 0$.  If $i$ misreports to $\hat{v}_i$ and does not get any items, then $u_i(\hat{v}_i, \hat{r}_i) = 0$. If $i$ misreports to $\hat{v}_i$ and $\pi_i^l(\hat{v}_i, \hat{r}_i) \neq 0$, then $u_i(\hat{v}_i, \hat{r}_i) =  v_i(\pi^l_i(\hat{v}_i, \hat{r}_i)) + \mathcal{SW}_{-\mathcal{R_{\emph{l}}}}(\hat{v}_i, \hat{r}_i) - \hat{v}_i(\pi^l_i(\hat{v}_i, \hat{r}_i)) - \mathcal{SW}_{-D_i} \le 0 = u_i(v_i, \hat{r}_i)$ because there are at least $\pi_i^l(\hat{v}_i, \hat{r}_i)$ buyers in $\mathcal{L_{\emph{l}}}$ whose value of the first item is larger than $i$'s.
\end{proof}

The following property shows that buyers with positive potential utility have no incentive to change the removed set by stopping information diffusion.

\begin{observation}
For any buyer $j$ and $i \in r_j$, given any $\hat{\theta}_{-i}$, if $i \in C_j^{\mathcal{R}}((\hat{v}_i , r_i), \hat{\theta}_{-i})$ and $i \in C_j^{\mathcal{R}}((\hat{v}_i , \hat{r}_i), \hat{\theta}_{-i})$, then $C_j^{\mathcal{R}}((\hat{v}_i , \hat{r}_i), \hat{\theta}_{-i}) = C_j^{\mathcal{R}}((\hat{v}_i , r_i), \hat{\theta}_{-i})$.
\label{O2}
\end{observation}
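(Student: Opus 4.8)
The plan is to prove Observation~\ref{O2} by mirroring the structure of the proof of Observation~\ref{O1}, but tracking changes in the invitation set $\hat r_i$ rather than the valuation report $\hat v_i$. The key point is that moving from $(\hat v_i, r_i)$ to $(\hat v_i, \hat r_i)$ with $\hat r_i \subseteq r_i$ can only \emph{decrease} the set of $i$'s children in the tree; so the only way $i$'s membership category (in $C_j^{\mathcal P}$ versus $C_j^{\mathcal W}$) could change is by $i$ losing all her children and thus dropping out of $C_j^{\mathcal P}$. I would split into the two cases according to which subset $i$ belongs to under the larger invitation set.

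First I would handle \textbf{Case 1}: $i \in C_j^{\mathcal P}((\hat v_i, r_i), \hat\theta_{-i})$. By hypothesis $i \in C_j^{\mathcal R}((\hat v_i, \hat r_i), \hat\theta_{-i})$, so $i$ is either in $C_j^{\mathcal P}$ or in $C_j^{\mathcal W}$ of the smaller profile. If reducing invitations kept at least one child of $i$ alive, then $i$ is still in $C_j^{\mathcal P}$, so $C_j^{\mathcal P}$ is unchanged (note $C_j^{\mathcal P}$ for siblings of $i$ depends only on their own subtrees, hence is unaffected), hence $|C_j^{\mathcal P}|$ is unchanged, hence the quota $\mathcal K + \mu - |C_j^{\mathcal P}|$ defining $C_j^{\mathcal W}$ is unchanged, and since the ranking of buyers in $C_j(\hat\theta)\setminus C_j^{\mathcal P}(\hat\theta)$ by first-unit value is unaffected by $i$'s invitations (those buyers are siblings of $i$ and $i\notin C_j^{\mathcal W}$ here), $C_j^{\mathcal W}$ is unchanged, and therefore $C_j^{\mathcal R}$ is unchanged. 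The subcase where $i$ loses all children but remains in $C_j^{\mathcal R}$ is the delicate one: then $i$ drops out of $C_j^{\mathcal P}$, the quota for $C_j^{\mathcal W}$ grows by one, and $i$ must now sit in $C_j^{\mathcal W}$ — I need to argue this actually contradicts something, or that this transition is exactly what is ruled out. I expect the resolution is that this case cannot satisfy the hypothesis $i\in C_j^{\mathcal R}((\hat v_i,r_i),\cdot)$ together with the quota being $\mu$-padded; this padding is precisely why $\mu$ was introduced (as the paper remarks right before the notation table), so I would invoke that the padding by $\mu \ge \max_i|C_i^{\mathcal P}|$ guarantees $i$ would already have been in $C_j^{\mathcal W}$ under the larger profile — contradicting $i \in C_j^{\mathcal P}$ there — or more carefully, that the sets still coincide after the shift because the displaced former-$\mathcal P$ members are reabsorbed into the now-larger $\mathcal W$ quota. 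Then \textbf{Case 2}: $i \in C_j^{\mathcal W}((\hat v_i, r_i), \hat\theta_{-i})$. Then $i$ has no children in the larger profile (else $i\in C_j^{\mathcal P}$), so reducing invitations keeps $i$ childless, so $i\notin C_j^{\mathcal P}$ in the smaller profile; combined with $i\in C_j^{\mathcal R}$ we get $i\in C_j^{\mathcal W}$ there, and since no $C_j^{\mathcal P}$ membership changed (no sibling's subtree is touched, and $i$ was never in $C_j^{\mathcal P}$), the quota is the same and the ranking among siblings is the same, so $C_j^{\mathcal W}$ and hence $C_j^{\mathcal R}$ are unchanged.

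The main obstacle is the bookkeeping in Case~1 when $i$ transitions from $C_j^{\mathcal P}$ to $C_j^{\mathcal W}$ upon losing all her children: I must show that the net effect on $C_j^{\mathcal R} = C_j^{\mathcal P}\cup C_j^{\mathcal W}$ is null, i.e., that the element leaving $C_j^{\mathcal P}$ and the one extra slot opening in $C_j^{\mathcal W}$ conspire so that the union is identical. This is the step where the choice of quota $\mathcal K+\mu-|C_j^{\mathcal P}|$ (rather than just $\mathcal K-|C_j^{\mathcal P}|$) does the work, and I would argue it carefully by comparing the two unions element by element, using that $|C_j^{\mathcal P}| \le \mu$ so the $\mathcal W$-quota is always at least $\mathcal K$ and the "+1 slot" is genuinely available. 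Once that case is closed, Observation~\ref{O2} follows, and it feeds (together with Observation~\ref{O1} and Lemma~\ref{l3}) into the full proof of incentive compatibility (Theorem~\ref{t2}): a buyer with positive potential utility cannot, by withholding invitations, alter which competitors are removed when lower layers are allocated, hence cannot manipulate those lower-layer allocations to her benefit.
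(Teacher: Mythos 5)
Your proposal follows the same case decomposition as the paper's proof and is essentially correct; the only place you waver is the delicate subcase where $i$ loses all her children and drops out of $C_j^{\mathcal{P}}$, and there the paper settles it the second way you suggest (absorption), not by contradiction. Concretely: that transition genuinely can occur, and the hypothesis $i\in C_j^{\mathcal{R}}((\hat v_i,\hat r_i),\hat\theta_{-i})$ together with $i\notin C_j^{\mathcal{P}}((\hat v_i,\hat r_i),\hat\theta_{-i})$ forces $i$ into the new $C_j^{\mathcal{W}}$; since the candidate pool $C_j\setminus C_j^{\mathcal{P}}$ grows by exactly $\{i\}$ while the quota $\mathcal{K}+\mu-|C_j^{\mathcal{P}}|$ grows by exactly one, the new $C_j^{\mathcal{W}}$ is the old one plus $\{i\}$, so the union $C_j^{\mathcal{R}}$ is unchanged. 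Do not pursue the impossibility route: absent the two-sided membership hypothesis the set really can change (a low-valued childless $i$ would fall out of $C_j^{\mathcal{R}}$ and be replaced by the next-ranked sibling), which is precisely why the observation is stated with that hypothesis.
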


\begin{proof}
\textbf{Case 1:} If $i \in C_j^{\mathcal{P}}$ when $\hat{\theta}_i = (\hat{v_i}, r_i)$, $i$ may belong to two different sets after changing report type to $(\hat{v}_i, \hat{r}_i)$.

(1) If $i \in C_j^{\mathcal{P}}$, then no sets will change and $C_j^{\mathcal{R}}$ will not change as well.

(2) If $i \in C_j^{\mathcal{W}}$, then $|C_j^{\mathcal{P}}|$ will decrease by $1$ and $\mu + \mathcal{K} - |C_j^{\mathcal{P}}|$ will increase by $1$. The size increase of $C_j^{\mathcal{W}}$ will absorb $i$, which makes $C_j^{\mathcal{R}}$ unchanged.

\textbf{Case 2:} If $i \in C_j^{\mathcal{W}}$ when $\hat{\theta}_i = (\hat{v_i}, r_i)$, then $i$ do not have children, which means $\hat{r}_i = r_i = \emptyset$. In that case, $C_j^{\mathcal{R}}$ will not change. 
\end{proof}

Using Observation \ref{O2}, in Lemma \ref{l4}, we prove that in LDM-Tree, for any buyer $i$, given $i$ reports valuation truthfully, diffusing information to all neighbors is a dominant strategy.
\begin{lemma}  \label{l4}
Given any $\hat{\theta}_{-i}$ and fix the valuation report of $i$ to be $v_i$, for any $\hat{r}_i \subseteq r_i$, we have $u_i((v_i, r_i),  \hat{\theta}_{-i}) \ge u_i((v_i, \hat{r}_i), \hat{\theta}_{-i})$.
\end{lemma}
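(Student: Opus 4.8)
The plan is to write buyer $i$'s utility in closed form as a difference of two constrained social‑welfare optima and then show that shrinking $i$'s invitation set only hurts that difference. Fix $\hat\theta_{-i}$ and let $i\in\mathcal{L}_l$ with parent $j$ in the induced tree (note that $i$'s layer and parent do not depend on $\hat r_i$; when $l=1$ the role of $j$ below is vacuous and the argument only gets simpler). Inspecting the payment rules of Algorithm~\ref{NRM-T} and using that $i$ reports $v_i$ truthfully, one checks that — whether or not $i$ wins — her utility under report $(v_i,\hat r_i)$ equals $\mathcal{SW}_{-\mathcal{R}_l}-\mathcal{SW}_{-D_i}$ evaluated on the network induced by $(v_i,\hat r_i,\hat\theta_{-i})$, and equals $0$ if the break removes layer $l$. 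So it suffices to compare these welfare quantities for $\hat r_i=r_i$ against an arbitrary $\hat r_i\subseteq r_i$.

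I would split on $i$'s membership in the parent's removed set. If $i\notin C_j^{\mathcal{R}}(v_i,r_i,\hat\theta_{-i})$ then $i\notin C_j^{\mathcal{P}}$, so $i$ has no children even under full invitation; on a tree this means $i$ has no non‑parent neighbours, hence $\hat r_i$ induces the same network and there is nothing to prove. If $i\in C_j^{\mathcal{R}}$ under $(v_i,r_i)$ but $i\notin C_j^{\mathcal{R}}(v_i,\hat r_i,\hat\theta_{-i})$, then $i$ must be childless under $\hat r_i$ (else $i\in C_j^{\mathcal{P}}\subseteq C_j^{\mathcal{R}}$) and not among the top $\mathcal{K}+\mu-|C_j^{\mathcal{P}}|\ge\mathcal{K}$ of $j$'s childless children by first‑unit value; those at least $\mathcal{K}$ dominating buyers all lie in layer $l$, all remain in the layer‑$l$ allocation problem (as $\mathcal{R}_l$ never removes layer‑$l$ buyers) and are still unallocated there, so $\pi_i^l(v_i,\hat r_i)=0$, and since $i$ is childless $D_i=\mathcal{R}_l\cup\{i\}$ with $i$ getting nothing in the $\mathcal{R}_l$‑restricted optimum; hence $u_i(v_i,\hat r_i)=0$, and $u_i(v_i,r_i)\ge 0 = u_i(v_i,\hat r_i)$ by individual rationality (Theorem~\ref{t1}).

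The remaining and main case is $i\in C_j^{\mathcal{R}}$ under both $(v_i,r_i)$ and $(v_i,\hat r_i)$. Here Observation~\ref{O2} gives $C_j^{\mathcal{R}}(v_i,r_i)=C_j^{\mathcal{R}}(v_i,\hat r_i)$; combining this with the tree structure — distinct same‑layer buyers have disjoint subtrees, and $i$'s subtree sits entirely in layers $\ge l+1$ — one gets that $\mathcal{R}_1,\dots,\mathcal{R}_{l-1}$, hence the fixed partial allocations $\pi^1,\dots,\pi^{l-1}$, and the buyer set $Q\setminus D_i$ are all unchanged when $i$ switches from $r_i$ to $\hat r_i$, so $\mathcal{SW}_{-D_i}$ is the same constant (and the break removes layer $l$ in both cases or neither). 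It then remains to prove $\mathcal{SW}_{-\mathcal{R}_l}(v_i,r_i)\ge\mathcal{SW}_{-\mathcal{R}_l}(v_i,\hat r_i)$, and the key sub‑claim is the inclusion
\[
  C_i(\hat r_i)\setminus C_i^{\mathcal{R}}(\hat r_i)\ \subseteq\ C_i(r_i)\setminus C_i^{\mathcal{R}}(r_i).
\]
A child of $i$ surviving $i$'s removal step under the smaller invitation set is childless (hence also under full invitation) and lies outside the top $\mathcal{K}+\mu-|C_i^{\mathcal{P}}(\hat r_i)|$ of $i$'s childless children by first‑unit value; enlarging $\hat r_i$ to $r_i$ only enlarges the pool of childless children and only lowers the threshold $\mathcal{K}+\mu-|C_i^{\mathcal{P}}(\cdot)|$, so this child cannot enter the top of the enlarged pool. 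Since the rest of $\mathcal{R}_l$ is unchanged by subtree‑disjointness, this inclusion yields $Q\setminus\mathcal{R}_l(v_i,\hat r_i)\subseteq Q\setminus\mathcal{R}_l(v_i,r_i)$ under the same lower‑layer constraints, so the welfare optimum can only increase, giving $u_i(v_i,r_i)\ge u_i(v_i,\hat r_i)$.

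I expect this last case to be the real obstacle: one must argue carefully that $i$'s invitation choices cannot leak into earlier layers' removed sets or allocations — this is exactly where the tree structure and Observation~\ref{O2} are needed — and the crux is the displayed inclusion, which is precisely where the slack $\mu$ in the definition of $C_i^{\mathcal{W}}$ is essential: without it, simultaneously dropping one buyer from $C_i^{\mathcal{P}}$ and adding one to $C_i^{\mathcal{W}}$ need not preserve a given child's survival status, and the inclusion (and hence the invitation incentive) could fail.
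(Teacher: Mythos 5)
Your proof is correct and follows essentially the same route as the paper's: write $u_i$ in closed form as $\mathcal{SW}_{-\mathcal{R}_l}-\mathcal{SW}_{-D_i}$, use Observation~\ref{O2} to pin down the removed sets and hence the lower-layer constraints, note that $\mathcal{SW}_{-D_i}$ is independent of $\hat r_i$, and conclude by monotonicity of $\mathcal{SW}_{-\mathcal{R}_l}$ in the invitation set. You are in fact somewhat more careful than the paper at two points: you explicitly prove the surviving-children inclusion behind the monotonicity claim (which the paper only asserts), and you handle via individual rationality the edge case where $i$ invites nobody and falls out of $C_j^{\mathcal{R}}$, which the paper dispatches with the remark that $i$ ``can not control any diffusion.''
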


\begin{proof}
Suppose $i \in \mathcal{L_{\emph{l}}}$ and her parent is $j$. If $r_i \neq \emptyset$, then when $i$ diffuses information to all neighbors, $i \in C_j^{\mathcal{R}}$. If $\hat{r}_i \neq \emptyset$, $i$ is still in $C_j^{\mathcal{R}}$ after diffusing the information to less neighbors. By Observation 2, no matter how many neighbors $i$ diffuses information to,  $\mathcal{SW}_{-\mathcal{R_{\emph{l}}}}(v_i, \hat{r}_i)$ and $\mathcal{SW}_{-D_i}$ are computed under the same constraints. When computing $\mathcal{SW}_{-D_i}$, buyers in $C_i$ are all removed. Thus $\mathcal{SW}_{-D_i}$ is a constant given $\hat{\theta}_{-i}$. $u_i(v_i, \hat{r}_i) = \mathcal{SW}_{-\mathcal{R_{\emph{l}}}}(v_i, \hat{r}_i) - \mathcal{SW}_{-D_i}$ where the first item of is monotonic increasing with $\hat{r}_i$ while the second item is independent of $\hat{r}_i$. If $\hat{r_i} = \emptyset$, then $i$ can not control any diffusion. So diffusing information to all neighbors will maximize the utility of $i$. 
\end{proof}

\setcounter{theorem}{1}
Combining Lemma 1 and Lemma 2, we prove LDM-Tree is IC.
\begin{theorem}
    The LDM-Tree is incentive compatible (IC).
\end{theorem}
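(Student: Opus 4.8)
The plan is to obtain the theorem as a direct composition of Lemma~\ref{l3} and Lemma~\ref{l4}, which between them already isolate the two orthogonal deviation levers: misreporting the valuation vector and withholding neighbors. First I would fix an arbitrary buyer $i$, an arbitrary opponent profile $\hat{\theta}_{-i}$, and an arbitrary joint deviation $\hat{\theta}_i = (\hat{v}_i, \hat{r}_i)$ with $\hat{r}_i \subseteq r_i$, and factor the move from the truthful report $(v_i, r_i)$ to $(\hat{v}_i, \hat{r}_i)$ through the intermediate report $(v_i, \hat{r}_i)$. Note that $i$'s own layer $d_i$ depends only on the structure induced by the others' reports and on whether $i$ is reachable, not on $\hat{v}_i$ or on which subset of $r_i$ she invites, so the layer of $i$ is the same at all three reports and the decomposition compares utilities computed under a consistent layer structure for $i$.

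Next I would invoke Lemma~\ref{l3} with the invitation set held at $\hat{r}_i$ to obtain $u_i((v_i, \hat{r}_i), \hat{\theta}_{-i}) \ge u_i((\hat{v}_i, \hat{r}_i), \hat{\theta}_{-i})$, and Lemma~\ref{l4} with the valuation held at the truthful $v_i$ to obtain $u_i((v_i, r_i), \hat{\theta}_{-i}) \ge u_i((v_i, \hat{r}_i), \hat{\theta}_{-i})$. Chaining the two inequalities yields $u_i(\theta_i, \hat{\theta}_{-i}) = u_i((v_i, r_i), \hat{\theta}_{-i}) \ge u_i((\hat{v}_i, \hat{r}_i), \hat{\theta}_{-i}) = u_i(\hat{\theta}_i, \hat{\theta}_{-i})$, which is exactly the incentive-compatibility condition, since $i$, $\hat{\theta}_i$ and $\hat{\theta}_{-i}$ were arbitrary. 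The step that makes the factorization legitimate is that Lemma~\ref{l3} is stated for an \emph{arbitrary fixed} invitation $\hat{r}_i$, not merely for the truthful $r_i$; this is precisely what allows the node $(v_i, \hat{r}_i)$ to sit on the deviation path, rather than forcing us to go through $(\hat{v}_i, r_i)$, for which we have no direct comparison.

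I expect no genuine obstacle at the level of the theorem itself: essentially all of the real work has already been discharged inside the two lemmas, namely the arguments (via Observations~\ref{O1} and~\ref{O2}) that a buyer who is in her parent's removed set $C_j^{\mathcal{R}}$ cannot reshuffle that set---and hence cannot perturb the constraints fixing the lower layers' allocations---by either misreporting her valuation or stopping diffusion, together with the optimality argument showing a valuation misreport cannot help the layer-$l$ allocation either. The only thing to double-check when writing up the final step is purely bookkeeping: that the utility quantities appearing in the statements of Lemma~\ref{l3} and Lemma~\ref{l4} are the same object $u_i(\theta_i,\hat{\theta},(\pi,p))$ evaluated under the actual LDM-Tree allocation and payment, so the chained inequalities compare like with like; this is immediate from the definitions.
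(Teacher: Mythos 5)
Your proposal is correct and matches the paper's own argument exactly: the paper proves IC by chaining $u_i(v_i, r_i) \ge u_i(v_i, \hat{r}_i) \ge u_i(\hat{v}_i, \hat{r}_i)$, with the first inequality from Lemma~\ref{l4} (invitation, valuation fixed at truth) and the second from Lemma~\ref{l3} (valuation, invitation fixed at $\hat{r}_i$). Your additional remarks on why the factorization through $(v_i,\hat{r}_i)$ is legitimate are sound but not needed beyond what the lemma statements already provide.
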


\begin{proof}
Given $\hat{\theta}_{-i}$, for all $(\hat{v}_i, \hat{r}_i)$, we have $u_i(v_i, r_i) \ge u_i(v_i, \hat{r}_i) \ge u_i(\hat{v}_i, \hat{r}_i)$ where the first inequality comes from Lemma ~\ref{l4} and the second inequality comes from Lemma ~\ref{l3}. So LDM-Tree is IC. 
\end{proof}

\setcounter{proposition}{1}
\subsection{Proofs Related to nonwastfulness}\label{app:nonwasterful-proof}
\begin{proposition}
    LDM-Tree is non-wasteful.
\end{proposition}

\begin{proof}
Let $l^{max}$ be the number of layers. If there are $m$ remaining items when we compute the allocation for the last layer, then there is a set $M \subset \mathcal{L}_l(\hat{\theta})$ such that for all $i \in M$, $\pi^{l^{max}-1}_i(\hat{\theta}) = 1$.  Let $W^{l^{max} - 1} = \bigcup _i C_i^{\mathcal{W}}(\hat{\theta})$ for all $i \in \mathcal{L}_{l^{max} - 1}(\hat{\theta})$. Since for all $i \in M$, there is a buyer $j \in W^{l^{max} - 1}$ such that $v_j^1 \ge v_i^1$, then all $m$ remaining items will be allocated to buyers in $W^{l^{max} - 1}  \subset \mathcal{L}_l(\hat{\theta})$. In that case, all items will be allocated in the end. If no items remain for the last layer, then all items are allocated to lower layers. Thus, LDM-Tree is non-wasteful. 
\end{proof}

\subsection{Proofs Related to Revenue Improvement} \label{app:revenue_proof}
Now we will show the revenue improvement of LDM-Tree. Let $\pi_i^{VCG}, p_i^{VCG}$ be the allocation and payment of $i$ under VCG mechanism among the first layer. Let $p_i^{LDM}$ be the payment of $i$ under LDM-Tree mechanism. Let $\phi(z) = \mathcal{K} - z + 1$,  $ \sum_{k = \phi(z)}^{\mathcal{K}} v_{S}^k$ is the sum of the last $z$ values in top $\mathcal{K}$ ranked value in set $S$. In the following analysis, we use $-D_i$ as the abbreviation for $Q \setminus D_i$. For $i \in \mathcal{L_{\emph{l}}}$, let $m_i^{LDM} = \pi_i^l + \sum_{j \in C_i} \pi_j^l$. Let $q_i^{LDM} =  \mathcal{SW}_{-D_i} - \sum_{j \notin C_i \cup \{i\}} v_j(\pi_j^l) =  \sum_{k = \phi(m_i^{LDM})}^{\mathcal{K}} v_{-D_i}^k$. Let $t_i^{LDM} = \sum_{j \in C_i} v_j(\pi_j^l)$. Then $p_i^{LDM} = q_i^{LDM} - t_i^{LDM}.$

We will first show the sum of $q_i^{LDM}$ for all buyers in $\mathcal{L_{\text{1}}}$ is larger than revenue of VCG mechanism among $\mathcal{L_{\text{1}}}$. Then, we will show the sum of $t_i^{LDM}$ for $i \in \mathcal{L_{\emph{l} - \text{1}}}$ can be compensated by the sum of $q_i^{LDM}$ for $i \in \mathcal{L_{\emph{l}}}$ for any $l \ge 2$. Combining these two results, we will show the revenue of LDM-Tree is better than the revenue of VCG mechanism among the seller's neighbors.

\begin{observation}
$\sum_{i \in \mathcal{L_{\emph{1}}}} q_i^{LDM} \ge \sum_{i \in \mathcal{L_{\emph{1}}}} p_i^{VCG}$.
\end{observation}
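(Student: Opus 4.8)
The plan is to reinterpret each term $q_i^{LDM}$ as an ordinary VCG payment in a suitably merged market and then invoke a monotonicity property of total VCG revenue. First I would unfold the definitions: for $i \in \mathcal{L}_1$ we have $D_i = \mathcal{R}_1 \cup C_i \cup \{i\}$, and since $C_i^{\mathcal{R}} \subseteq \mathcal{R}_1$, the set $Q \setminus D_i$ equals $A \setminus g_i$, where $A := Q \setminus \mathcal{R}_1$ is the buyer set over which LDM-Tree optimizes in the first layer and $g_i := \{i\} \cup (C_i \setminus C_i^{\mathcal{R}})$ is the ``super-agent'' consisting of $i$ together with its children that survive into $A$. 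Consider the auction over $A$ in which each $i \in \mathcal{L}_1$ is replaced by the single agent $g_i$ whose valuation is the sorted merge of the valuations of its members (this is again diminishing-marginal). Because the goods are homogeneous, the optimal social welfare depends only on the multiset of marginal values and is unaffected by this merging, so $\mathcal{SW}_{-\mathcal{R}_1}$ and $\mathcal{SW}_{-D_i}$ are exactly the optimal welfare of this merged auction with, respectively, all agents present and agent $g_i$ removed, and $\pi^1$ induces an optimal allocation for it. A short computation from the definition of $q_i^{LDM}$ (using $\mathcal{SW}_{-\mathcal{R}_1} = \sum_j v_j(\pi^1_j)$ over $A$, and that members of $C_i \cap C_i^{\mathcal{R}} \subseteq \mathcal{R}_1$ receive nothing) gives $q_i^{LDM} = \mathcal{SW}_{-D_i} - \mathcal{SW}_{-\mathcal{R}_1} + \sum_{j \in C_i \cup \{i\}} v_j(\pi^1_j)$, which is precisely the VCG payment of $g_i$ in the merged auction. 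Hence $\sum_{i \in \mathcal{L}_1} q_i^{LDM}$ is the total VCG revenue of the merged auction over $A$.

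Next I would prove the monotonicity lemma: in a multi-unit homogeneous-goods auction with diminishing marginal valuations, enriching one agent's valuation vector (appending extra non-negative marginals) does not decrease the total VCG revenue. I would write the total revenue in the closed form $\sum_{i \in N} SW^*(N \setminus i) - (|N|-1)\,SW^*(N)$, where $SW^*(S)$ is the optimal welfare achievable with buyer set $S$ and at most $\mathcal{K}$ units. Enriching agent $a$ leaves $SW^*(N \setminus a)$ unchanged and increases every other $SW^*(N \setminus i)$ and $SW^*(N)$ by the amount that the new marginals $\Delta_a$ add to the corresponding multiset of marginals. Since ``sum of the $\mathcal{K}$ largest elements of a multiset'' is submodular — the gain from inserting one element is nonincreasing in the base multiset, because it equals that element minus the current $\mathcal{K}$-th largest, truncated at zero, and the $\mathcal{K}$-th largest is monotone in the base — the increment $\Delta SW^*(N \setminus i)$ dominates $\Delta SW^*(N)$ for every $i \neq a$ (apply submodularity to $\mathcal{M}(N\setminus i) \subseteq \mathcal{M}(N)$ with the common increment $\Delta_a$, inserted element by element). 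Summing, the change in revenue is $\sum_{i \neq a}\Delta SW^*(N \setminus i) - (|N|-1)\Delta SW^*(N) \ge 0$. Equivalently, deleting marginals from an agent never increases total VCG revenue.

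Finally I would assemble the two pieces: starting from the merged auction over $A$ and deleting, one super-agent at a time, the members in $C_i \setminus C_i^{\mathcal{R}}$ (i.e.\ de-enriching $g_i$ down to the singleton $\{i\}$ with valuation $v_i$), each step weakly decreases the total VCG revenue by the lemma; after all deletions the market is exactly $\mathcal{L}_1$ with the original valuations, whose VCG revenue is $\sum_{i \in \mathcal{L}_1} p_i^{VCG}$. Therefore $\sum_{i \in \mathcal{L}_1} q_i^{LDM} \ge \sum_{i \in \mathcal{L}_1} p_i^{VCG}$. The main obstacle is the monotonicity lemma — establishing the submodularity of the top-$\mathcal{K}$-sum carefully for multisets, and verifying that the revenue formula and the de-enrichment argument survive unchanged when fewer than $\mathcal{K}$ units are sold, so that some $v^k_{-D_i}$ and some VCG externalities vanish. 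The identification step is routine algebra once the set identity $Q \setminus D_i = A \setminus g_i$ is in place, which itself only uses $C_i^{\mathcal{R}} \subseteq \mathcal{R}_1$ together with the fact that buyers removed into $\mathcal{R}_1$ get no allocation in $\pi^1$.
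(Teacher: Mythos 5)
Your proposal is correct, but it takes a genuinely different route from the paper's. The paper proves the inequality by a direct term-by-term comparison: it partitions $\mathcal{L}_1$ into three groups according to whether $\pi_i^{VCG}$ is smaller than, equal to, or larger than $m_i^{LDM}$, bounds $p_i^{VCG}$ against $q_i^{LDM}$ in each group, and then runs a global balancing argument in which the slack accumulated by the first group ($\sum_{i\in\alpha_1}\sum_{k=\phi(m_i^{LDM})}^{\mathcal{K}-\pi_i^{VCG}} v_{-D_i}^k$) absorbs the excess $\sum_{i\in\gamma_1}\sum_{k=\pi_i^1+1}^{\pi_i^1+\Delta_i} v_i^k$ incurred by the third. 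You instead observe that, since $C_i\cap\mathcal{R}_1=C_i^{\mathcal{R}}$ in a tree and $Q\setminus\mathcal{R}_1$ is the disjoint union of the super-agents $g_i=\{i\}\cup(C_i\setminus C_i^{\mathcal{R}})$, the quantity $q_i^{LDM}=\mathcal{SW}_{-D_i}-\mathcal{SW}_{-\mathcal{R}_1}+\sum_{j\in C_i\cup\{i\}}v_j(\pi_j^1)$ is exactly the Clarke payment of $g_i$ in the merged market, so $\sum_{i\in\mathcal{L}_1}q_i^{LDM}$ is that market's total VCG revenue; you then reduce to the first-layer VCG market by stripping marginals and invoke a monotonicity lemma (enriching an agent's marginal multiset weakly increases total VCG revenue), which you correctly derive from the closed form $\sum_i SW^*(N\setminus i)-(n-1)SW^*(N)$ together with the submodularity of the top-$\mathcal{K}$ sum. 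Both arguments are sound; yours is more modular and arguably cleaner, since it replaces the paper's delicate cross-group accounting with a reusable structural lemma about VCG revenue in multi-unit diminishing-marginal auctions, at the cost of having to set up the merged-market identification and verify the degenerate cases (fewer than $\mathcal{K}$ marginals present), both of which you handle or flag appropriately.
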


\begin{proof}
We divide buyers in $\mathcal{L_{\text{1}}}$ into three groups $\alpha_1, \beta_1$ and $\gamma_1$ according to $\pi_i^{VCG}$ and  $m_i^{LDM}$.

Let all buyers $i$ with $\pi_i^{VCG} < m_i^{LDM} $ be in the group $\alpha_1$. For these buyers, we have 
\begin{align*}
p_i^{VCG} &= \sum_{k = \phi(\pi_i^{VCG})}^{\mathcal{K}}v^k_{\mathcal{L_{\text{1}}} \setminus \{i\}} 
    \\&\le \sum_{k =  \phi( \pi_i^{VCG}) } ^{\mathcal{K}} v_{-D_i}^k  
    \\&\le \sum_{k =  \phi( \pi_i^{VCG}) } ^{\mathcal{K}} v_{-D_i}^k  + \sum_{k =  \phi(m_i^{LDM}) } ^{\mathcal{K} - \pi_i^{VCG}} v_{-D_i}^k = q_i^{LDM}
\end{align*}

Let all buyers $i$ with $\pi_i^{VCG} = m_i^{LDM} $ be in the group $\beta_1$. For these buyers, we have 
\begin{align*}
p_i^{VCG} &= \sum_{k = \phi(\pi_i^{VCG})}^{\mathcal{K}}v^k_{\mathcal{L_{\text{1}}} \setminus \{i\}} 
\\&\le \sum_{k =  \phi( \pi_i^{VCG}) } ^{\mathcal{K}} v_{-D_i}^k  
\\&= \sum_{k =  \phi( m_i^{LDM}) } ^{\mathcal{K}} v_{-D_i}^k = q_i^{LDM}    
\end{align*}

Let all buyers $i$ with $\pi_i^{VCG} > m_i^{LDM} $ be in the group $\gamma_1$ and  $\Delta_i = \pi_i^{VCG} - m_i^{LDM}$. Then 
\begin{equation*}
p_i^{VCG} = \sum_{k = \phi(\pi_i^{VCG})}^{\mathcal{K}}v^k_{\mathcal{L_{\text{1}}} \setminus \{i\}} \le \sum_{k = \phi(m_i^{LDM})}^{\mathcal{K}} v_{-D_i}^k + \sum_{k=\pi_i^1 + 1}^{\pi_i^1 + \Delta_i} v_i^k
\end{equation*}
Since $\sum\limits_{i \in \alpha_1} \sum_{k =  \phi(m_i^{LDM}) } ^{\mathcal{K} - \pi_i^{VCG}} v_{-D_i}^k \ge \sum\limits_{i \in \gamma_1}\sum_{k=\pi_i^1 + 1}^{\pi_i^1 + \Delta_i} v_i^k$, we have
\begin{align*}
    &\sum\limits_{i \in \alpha_1} p_i^{VCG} + \sum\limits_{i \in \gamma_1} p_i^{VCG}
    \\&\le \sum\limits_{i \in \alpha_1}\sum_{k =  \phi( \pi_i^{VCG}) } ^{\mathcal{K}} v_{-D_i}^k 
    \\&+\sum\limits_{i \in \gamma_1}(\sum_{k=\mathcal{K}-m_i^{LDM} + 1}^{\mathcal{K}} v_{-D_i}^k + \sum_{k=\pi_i^1 + 1}^{\pi_i^1 + \Delta_i} v_i^k) 
    \\&\le \sum\limits_{i \in \alpha_1}(\sum_{k =  \phi( \pi_i^{VCG}) } ^{\mathcal{K}} v_{-D_i}^k  + \sum_{k =  \phi(m_i^{LDM}) } ^{\mathcal{K} - \pi_i^{VCG}} v_{-D_i}^k) 
    \\&+  \sum\limits_{i \in \gamma_1} \sum_{ k= \phi(m_i^{LDM}) }^{\mathcal{K}}v_{\mathcal{L_{\text{1}}} \setminus \{i\}}^k
    \\ &\le \sum\limits_{i \in \alpha_1}q_i^{LDM} + \sum\limits_{i \in \gamma_1} q_i^{LDM}
\end{align*}
Combining with buyers in $\beta_1$, we get $\sum_{i \in \mathcal{L_{\text{1}}}} p_i^{VCG} \le  \sum_{i \in \mathcal{L_{\text{1}}}} q_i^{LDM}$. 

\end{proof}

\begin{observation}
For any layer $l \ge 2$, $\sum\limits_{i \in \mathcal{L}_{\text{l}}} q_i^{LDM} \ge \sum\limits_{i \in \mathcal{L_{\text{l} - \emph{1}}}} t_i^{LDM}$.
\end{observation}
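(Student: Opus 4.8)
\textbf{Proof proposal for Observation~3 ($\sum_{i \in \mathcal{L}_l} q_i^{LDM} \ge \sum_{i \in \mathcal{L}_{l-1}} t_i^{LDM}$).}

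The plan is to chain the two layers together through the structure of the removed sets. Fix $l \ge 2$. On the right-hand side, $t_i^{LDM} = \sum_{j \in C_i} v_j(\pi_j^l)$ for $i \in \mathcal{L}_{l-1}$ is exactly the total value that layer $l-1$ ``resells'' to its children in layer $l$; crucially, only children in $C_i^{\mathcal{R}}(\hat\theta)$ can actually receive items at that stage, since every buyer outside $\mathcal{R}_{l-1}(\hat\theta)$ that could win has been removed, and the children of $i$ that get positive allocation in $\pi^{l-1}$ were in $Q \setminus \mathcal{R}_{l-1}$. Hence $\sum_{i \in \mathcal{L}_{l-1}} t_i^{LDM}$ is a sum of first-unit-type values over a subset of layer-$l$ buyers, each appearing with multiplicity at most its allocation $\pi_j^l$, and the total number of units involved is at most $\mathcal{K}$. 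On the left-hand side, for $i \in \mathcal{L}_l$, $q_i^{LDM} = \sum_{k=\phi(m_i^{LDM})}^{\mathcal{K}} v_{-D_i}^k$ is the sum of the bottom $m_i^{LDM}$ values among the top $\mathcal{K}$ values in $Q \setminus D_i$, where $m_i^{LDM} = \pi_i^l + \sum_{j \in C_i}\pi_j^l$ counts the units going to $i$ and its children.

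The key steps, in order: (1) Partition layer $l-1$ into the buyers whose children receive items in $\pi^{l-1}$ and the rest, and likewise track for each such child which parent resold to it; because the tree is, well, a tree, each layer-$l$ buyer has a unique parent, so the ``resale targets'' across different $i \in \mathcal{L}_{l-1}$ are disjoint sets of layer-$l$ buyers. (2) For a fixed $i \in \mathcal{L}_{l-1}$ whose allocation $\pi_i^{l-1}$ is nonzero (those with $\pi_i^{l-1}=0$ contribute $t_i^{LDM}$ that is already ``paid for'' differently — actually $t_i^{LDM}$ can be nonzero even then, so one must handle the general resale term), relate $t_i^{LDM}$ to a quantity computable inside $Q \setminus D_{i'}$ for some descendant $i' \in \mathcal{L}_l$: the children of $i$ that win in layer $l-1$'s subproblem are precisely the high-value buyers that, when we later pass to layer $l$ and remove $D_{i'}$ for a sibling $i'$, still appear among the top $\mathcal{K}$ values competing. (3) Use the diminishing-marginal-utility greedy characterization of all the $\mathcal{SW}$ quantities so that every term in sight is literally a partial sum of sorted marginal values, then argue a counting/majorization inequality: the values summed in $\sum t_i^{LDM}$ over layer $l-1$ are dominated term-by-term by the values summed in $\sum q_i^{LDM}$ over layer $l$, because the former are a subcollection of first-unit types of layer-$l$ winners and the latter ranges over the top $\mathcal{K}$ values of supersets that include those same buyers. (4) Conclude by summing the per-parent inequalities, using the disjointness from step (1) to avoid double counting.

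I expect the main obstacle to be step (2)–(3): making precise the claim that each unit resold by layer $l-1$ to a child $j$ at value $v_j^k$ can be charged to a distinct value appearing inside some $q_{i'}^{LDM}$ with $v_{i'-\text{side value}} \ge v_j^k$. The subtlety is that $q_i^{LDM}$ for $i \in \mathcal{L}_l$ is computed in $Q \setminus D_i$, which removes $i$'s own children but keeps its siblings and their descendants, whereas $t_i^{LDM}$ for the parent aggregates over all siblings at once; one has to track carefully which layer-$l$ buyers survive in which $-D_{i'}$ problem and show that the constraint ``allocations to lower layers are fixed'' does not shrink the available top-$\mathcal{K}$ pool below what is needed. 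A clean way to do this is probably to fix the realized allocation $\pi^{l-1}$ and $\pi^l$, observe that the set of layer-$l$ buyers receiving items is the same across all these subproblems (since it is determined by $\pi^l$), and then the inequality reduces to: for each parent $i \in \mathcal{L}_{l-1}$, the first-unit values of $i$'s winning children are each at most the corresponding marginal values that $q^{LDM}$ of those very children (or their siblings) sums up — which follows because a buyer $j$ with $\pi_j^l > 0$ contributes $v_j^1 \ge v_j^2 \ge \cdots$ and these same first entries sit among the top-$\mathcal{K}$ values of $Q \setminus D_j$. Once the bookkeeping is set up, the final inequality is a routine application of the sortedness of marginal valuations.
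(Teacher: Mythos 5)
Your overall shape---an aggregate charging argument that matches the units counted by $\sum_{i \in \mathcal{L}_{l-1}} t_i^{LDM}$ against the values summed in $\sum_{i \in \mathcal{L}_{l}} q_i^{LDM}$---is the right one, but the mechanism you propose for the term-by-term domination rests on a reversed reading of the removed sets, and the reversal sits exactly where the observation gets its force. You write that ``only children in $C_i^{\mathcal{R}}(\hat{\theta})$ can actually receive items'' at step $l-1$ and later that the winning children ``are precisely the high-value buyers.'' In fact $C_i^{\mathcal{R}}(\hat{\theta}) \subseteq \mathcal{R}_{l-1}(\hat{\theta})$ is \emph{removed} from the step-$(l-1)$ optimization, so a child $j$ of $i \in \mathcal{L}_{l-1}$ with $\pi_j^{l-1} \neq 0$ must lie in $C_i \setminus C_i^{\mathcal{R}}$; since $C_i^{\mathcal{W}}$ keeps the top $\mathcal{K}+\mu-|C_i^{\mathcal{P}}| \ge \mathcal{K}$ first-unit values, such a $j$ is outranked by at least $\mathcal{K}$ of its siblings. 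That is the whole point: those siblings are layer-$l$ buyers, so they survive in $Q \setminus D_{j'}$ for every $j' \in \mathcal{L}_l$ (with $j$ itself standing in when $j'$ is one of them), which forces $v_{-D_{j'}}^{\mathcal{K}} \ge v_j^1$ and hence $q_{j'}^{LDM} \ge m_{j'}^{LDM}\, v_j^1$; summing over $j'$ and over the winning $j$, and using that $\sum_{j'} m_{j'}^{LDM} = \sum_{j} \pi_j^{l-1} = \mathcal{K}^{remain}$, gives the observation. Your stated justification---that ``these same first entries sit among the top-$\mathcal{K}$ values of $Q \setminus D_j$''---cannot work: $j \in D_j$, so $j$'s own values never appear in $q_j^{LDM}$ (the $q$'s are externality prices made of \emph{other} buyers' marginal values), and even membership of $v_j^1$ in a top-$\mathcal{K}$ list would not bound the bottom-of-top-$\mathcal{K}$ partial sum from below by $v_j^1$; what is needed is $\mathcal{K}$ \emph{other} values each at least $v_j^1$, which is precisely the fact you have backwards.

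A second, smaller gap: you never close the mismatch between $\pi_j^{l-1}$ (which defines $t_i^{LDM}$ for the parent $i$) and $m_j^{LDM}$ (which defines $q_j^{LDM}$). A layer-$l$ buyer can win units at step $l-1$ and lose all of them at step $l$ once its high-value siblings re-enter the pool, so the per-buyer inequality $q_j^{LDM} \ge v_j(\pi_j^{l-1})$ is false and no per-parent or per-child decomposition of the kind you sketch in steps (1)--(2) can be summed directly. The argument has to be run in aggregate, using only that the two collections of units have the same total size and that every value appearing on the $q$ side dominates every first-unit value appearing on the $t$ side.
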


\begin{proof}
For any $i \in \mathcal{L_{\emph{l} - \text{1}}}$ and $j \in C_i$ with $\pi_j^{l-1} \neq 0$, $j$ can't be the top $\mathcal{K}$ buyers in $C_i$ according to the report value for the first item. When computing $q_i^{LDM}$, there are at least $\mathcal{K}$ values that are larger than $v_j^1$. Thus $\sum\limits_{i \in \mathcal{L}_{\emph{l}}} q_i^{LDM} \ge \sum\limits_{i \in \mathcal{L_{\emph{l} - \text{1}}}} t_i^{LDM}$. 
\end{proof}

\begin{theorem}
The revenue of LDM-Tree is no less than the revenue of VCG mechanism in the first layer.
\end{theorem}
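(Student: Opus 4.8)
The plan is to obtain the theorem as a short telescoping argument that stitches together Observations 3 and 4. Write the revenue of LDM-Tree as $\sum_{l=1}^{l^{max}}\sum_{i\in\mathcal{L}_l}p_i^{LDM}$ and substitute the decomposition $p_i^{LDM}=q_i^{LDM}-t_i^{LDM}$ recorded just above. First I would reindex the $t$-part: a buyer in the deepest layer has no valid children, so $t_i^{LDM}=0$ for every $i\in\mathcal{L}_{l^{max}}$, and therefore $\sum_{l=1}^{l^{max}}\sum_{i\in\mathcal{L}_l}t_i^{LDM}=\sum_{l=2}^{l^{max}}\sum_{i\in\mathcal{L}_{l-1}}t_i^{LDM}$. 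This regroups the revenue as
\[
\sum_{l=1}^{l^{max}}\sum_{i\in\mathcal{L}_l}p_i^{LDM}\;=\;\sum_{i\in\mathcal{L}_1}q_i^{LDM}\;+\;\sum_{l=2}^{l^{max}}\Bigl(\sum_{i\in\mathcal{L}_l}q_i^{LDM}-\sum_{i\in\mathcal{L}_{l-1}}t_i^{LDM}\Bigr).
\]
By Observation 4 every summand of the outer sum is nonnegative, and by Observation 3 the leading term is at least $\sum_{i\in\mathcal{L}_1}p_i^{VCG}$, which is precisely the revenue of VCG restricted to the first layer. Chaining these two inequalities yields the claim.

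The only care beyond this is bookkeeping. I would first double-check that $q_i^{LDM}$, $t_i^{LDM}$ are well-defined and that $p_i^{LDM}=q_i^{LDM}-t_i^{LDM}$ holds for non-winning buyers too (where $\pi_i^l=0$), since their payments $\mathcal{SW}_{-D_i}-\mathcal{SW}_{-\mathcal{R}_l}$ may be negative but still enter the revenue sum; this follows by expanding $\mathcal{SW}_{-\mathcal{R}_l}=\hat v_i(\pi_i^l)+t_i^{LDM}+\sum_{j\notin C_i\cup\{i\}}\hat v_j(\pi_j^l)$ and matching terms, noting that children in $C_i^{\mathcal{R}}\subseteq\mathcal{R}_l$ contribute $0$ there. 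Second, I would dispatch early termination: if the loop breaks at layer $l_0<l^{max}$, all deeper buyers have $\pi_i=p_i=0$ and contribute nothing on either side, so the same telescoping applies with $l^{max}$ replaced by $l_0$; equivalently, $t_i^{LDM}=0$ for every layer at or below the one where the items run out, so the reindexing above is unaffected.

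I do not expect the obstacle to lie in this theorem — it is three or four lines once Observations 3 and 4 are in hand. The real work is already inside those two observations: Observation 3 is a VCG-style exchange argument that compares the top-$\mathcal{K}$ values of $\mathcal{L}_1\setminus\{i\}$ with those of $Q\setminus D_i$, splitting the first-layer buyers according to whether $\pi_i^{VCG}$ exceeds, equals, or falls below the subtree allocation $m_i^{LDM}$ and then paying for the ``$\gamma_1$'' deficit out of the slack accumulated on the ``$\alpha_1$'' side; Observation 4 exploits the defining property of the removed sets, namely that any child $j$ of $i$ with $\pi_j^{l-1}\neq 0$ must lie outside $C_i^{\mathcal{R}}$ and is hence beaten on its first unit by at least $\mathcal{K}+\mu-|C_i^{\mathcal{P}}|\ge \mathcal{K}$ children of $i$, so $j$'s contribution to $t_i^{LDM}$ is dominated by values that survive into the pool computing $q$ for layer $l$. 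Granting those, the proof of the theorem is just the displayed telescoping followed by the two inequalities, which I would present compactly.
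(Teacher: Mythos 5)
Your proposal is correct and follows essentially the same route as the paper's own proof: decompose $p_i^{LDM}=q_i^{LDM}-t_i^{LDM}$, telescope across layers using the vanishing of $t$ on the last allocating layer, and invoke Observation 3 for the first-layer term and Observation 4 for each subsequent bracket. The only cosmetic difference is that the paper truncates the sum at the last layer $f$ that receives items rather than at $l^{max}$, which is exactly the early-termination bookkeeping you already address.
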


\begin{proof}
Let $\mathcal{L}_{\emph{f}}$ be the last layer in which there are buyers who get items. Since no items are allocated in $\mathcal{L_{\emph{f} + \text{1}}}$, $\sum_{i \in \mathcal{L_{\emph{f}}}} t_i^{LDM} = 0$. Then $\sum_{i \in Q} p_i^{LDM} = \sum_{i \in \mathcal{L_{\text{1}}}} p_i^{LDM} + \sum_{l=2}^{f} \sum_{i \in \mathcal{L_{\emph{l}}}} p_i^{LDM} =  \sum_{i \in \mathcal{L_{\text{1}}}} q_i^{LDM} + \sum_{l=2}^{f} (\sum_{i \in \mathcal{L_{\emph{l}}}} q_i^{LDM} - \sum_{i \in \mathcal{L_{\emph{l}- \text{1}}}} t_i^{LDM})$. By Observation 3, $\sum_{i \in \mathcal{L_{\text{1}}}} q_i^{LDM}  \ge \sum_{i \in \mathcal{L_{\text{1}}}} p_i^{VCG}$. By Observation 4, $\sum_{l=2}^{f} (\sum_{i \in \mathcal{L_{\emph{l}}}} q_i^{LDM} - \sum_{i \in \mathcal{L_{\emph{l}- \text{1}}}} t_i^{LDM})  \ge 0$. Combine all above inequalities,  $\sum_{i \in Q} p_i^{LDM} \ge  \sum_{i \in \mathcal{L_{\text{1}}}} p_i^{VCG}$.
\end{proof}

\section{Missing Proofs in Section 5} \label{G-to-T proof}
\begin{breakablealgorithm}
\caption{}
\begin{algorithmic}[1]
    \REQUIRE A report profile $\hat{\theta}$ and a diffusion mechanism $\mathcal{M}$ for trees;
	\ENSURE  $\pi(\hat{\theta})$ and $p(\hat{\theta})$;
	
	\STATE Construct the graph $\mathcal{G}(\hat{\theta})$;
	\STATE Run breadth-first search on $\mathcal{G}(\hat{\theta})$ in alphabetical order to form  $\mathcal{T}^{BFS}(\hat{\theta})$ ;
	\STATE Run $\mathcal{M}$ on $\mathcal{T}^{BFS}(\hat{\theta})$;
	\STATE Return $\pi_i(\hat{\theta})$ and $p_i(\hat{\theta})$ for each buyer $i$.
\end{algorithmic}
\label{GTT}
\end{breakablealgorithm}

\begin{theorem} \label{t5}
If the input mechanism $\mathcal{M}$ for Algorithm~\ref{GTT} is IC and for any two buyers $i,j \in \mathcal{L}_l$, $u_i(\hat{C}_j) \ge u_i(C_j)$ where $\hat{C}_j \subseteq C_j$, then Algorithm~\ref{GTT} is IC.
\end{theorem}

\begin{proof}
Since $\mathcal{T}^{BFS}(\hat{\theta})$ in Algorithm~\ref{GTT} only depends on buyers' invitations, buyers will report valuation truthfully following the IC of $\mathcal{M}$. Buyers in the same layer can not influence each other by invitation because they have the same shortest distance to the seller. If a buyer $i$ does not diffuse information to her neighbor $w$, then $w$ can not be $i$'s child in $\mathcal{T}^{BFS}(\hat{\theta})$. Instead, $w$ will become the child of $j$ who is in the same layer as $i$ with $w \in r_j$ . Since $\mathcal{M}$ is IC, diffusing information to all neighbors is a dominant strategy which indicates that $u_i$ will be worse if $i$ has fewer children. In addition, $u_i$ is non-increasing when $j$ has more children. Thus $i$ has incentives to diffuse information and Algorithm~\ref{GTT} is IC. 
\end{proof}

\begin{corollary}
The LDM on graphs has the following properties. (1) Incentive Compatible. (2) Individually Rational. (3) Non-wasteful. (4) The social welfare of LDM is no less than the social welfare of the VCG mechanism in the first layer. (5) The revenue of LDM is no less than the revenue of the VCG mechanism in the first layer.
\end{corollary}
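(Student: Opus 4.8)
The plan is to unfold LDM as exactly Algorithm~\ref{GTT} instantiated with $\mathcal{M}$ being LDM-Tree, and to verify each of the five claims either transfers verbatim from the corresponding property of LDM-Tree or follows from Theorem~\ref{t5}. First I would record two structural facts about Algorithm~\ref{GTT}: the tree $\mathcal{T}^{BFS}(\hat\theta)$ depends only on the invitation reports $\hat r$ and not on the reported valuations; and breadth-first search preserves shortest distances, so for every $d$ the layer $\mathcal{L}_d$ of $\mathcal{T}^{BFS}(\hat\theta)$ equals $\mathcal{L}_d$ of $\mathcal{G}(\hat\theta)$, the valid-buyer set $Q$ is unchanged, and in particular layer $1$ of $\mathcal{T}^{BFS}(\hat\theta)$ is precisely the set of the seller's (reporting) neighbours. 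Given these, claims (2)--(5) are immediate, since an execution of LDM on $\mathcal{G}(\hat\theta)$ is literally an execution of LDM-Tree on $\mathcal{T}^{BFS}(\hat\theta)$: individual rationality is inherited from Theorem~\ref{t1}; non-wastefulness from the non-wastefulness of LDM-Tree (under the same non-emptiness assumption), because the same items and the same valid buyers are present; and the ``no less than VCG on the first layer'' bounds for social welfare and revenue from Proposition~\ref{t3} and Theorem~\ref{revenue}, using that ``first layer of $\mathcal{T}^{BFS}(\hat\theta)$'' is the same object as ``the seller's neighbours in $\mathcal{G}(\hat\theta)$''.

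The only substantive work is claim (1). Here the plan is to invoke Theorem~\ref{t5}, whose hypotheses are that $\mathcal{M}$ is IC and that for any two buyers $i,j$ in the same layer $u_i(\hat C_j)\ge u_i(C_j)$ whenever $\hat C_j\subseteq C_j$. The first hypothesis is Theorem~\ref{t2}. The second — same-layer children-monotonicity of LDM-Tree — is the genuinely new thing to establish, and I expect it to be the main obstacle. I would fix a layer $l$ and two buyers $i,j\in\mathcal{L}_l$, write $u_i=\mathcal{SW}_{-\mathcal{R}_l}-\mathcal{SW}_{-D_i}$ with $D_i=\mathcal{R}_l\cup C_i\cup\{i\}$, and track how the competition pool $Q\setminus\mathcal{R}_l$ (and, in parallel, the pool $Q\setminus D_i$) changes as $j$ is given its extra children one at a time.

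The crux is that the design of $\mathcal{R}_l$ limits what enlarging $C_j$ can do to these pools. Recall $C_j^{\mathcal{R}}=C_j^{\mathcal{P}}\cup C_j^{\mathcal{W}}$, where $C_j^{\mathcal{P}}$ is removed entirely and $C_j^{\mathcal{W}}$ is the top $\mathcal{K}+\mu-|C_j^{\mathcal{P}}|$ childless children of $j$; the padding by $\mu\ge\max_i|C_i^{\mathcal{P}}|$ guarantees this count stays nonnegative and the cap is never exceeded, exactly in the spirit of Observations~\ref{O1} and~\ref{O2}. Consequently, adding one child to $j$ either has children and so is swallowed by $C_j^{\mathcal{P}}$ (possibly shrinking $C_j^{\mathcal{W}}$ by one and pushing one lower-ranked childless child of $j$ back into the pool), or is childless and highly ranked and so is swallowed by $C_j^{\mathcal{W}}$ (again displacing at most one lower-ranked childless child into the pool), or is childless and ranked below the top $\mathcal{K}+\mu-|C_j^{\mathcal{P}}|$ and simply stays in the pool; in every case no \emph{high}-value buyer is forced into $Q\setminus\mathcal{R}_l$, the sets $C_{i'}^{\mathcal{R}}$ for $i'\neq j$ and for $j$'s non-children are untouched, and whatever buyer is inserted into $Q\setminus\mathcal{R}_l$ is also inserted into $Q\setminus D_i$ (it lies in neither $C_i$ nor $\{i\}$).

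To finish, I would apply the standard diminishing-marginal-returns fact for the ``sort all marginal valuations and take the top $\mathcal{K}$'' optimisation already used to justify the efficient computation of $\mathcal{SW}_{-\mathcal{R}_l}$ and $\mathcal{SW}_{-D_i}$: since $Q\setminus D_i\subseteq Q\setminus\mathcal{R}_l$, inserting any buyer $c$ into both pools raises $\mathcal{SW}_{-D_i}$ by at least as much as it raises $\mathcal{SW}_{-\mathcal{R}_l}$ (a smaller pool has weakly fewer competing high marginal values, so $c$'s marginal contribution is weakly larger there), hence $u_i=\mathcal{SW}_{-\mathcal{R}_l}-\mathcal{SW}_{-D_i}$ does not increase; iterating over the children added to $j$ yields $u_i(\hat C_j)\ge u_i(C_j)$. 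Both hypotheses of Theorem~\ref{t5} then hold, so LDM is IC, which establishes (1) and hence the corollary. The delicate part is purely the bookkeeping of the previous paragraph — verifying that enlarging $C_j$ never promotes a high-value buyer into the pool and never perturbs the removed sets of other buyers — which is precisely what the $\mu$-padding of $C_j^{\mathcal{W}}$ is engineered to guarantee.
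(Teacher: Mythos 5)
Your proposal is correct and follows essentially the same route as the paper: properties (2)--(5) are inherited verbatim from LDM-Tree run on $\mathcal{T}^{BFS}(\hat{\theta})$, and (1) is obtained by verifying the children-monotonicity hypothesis of Theorem~\ref{t5}, tracking that enlarging $C_j$ changes $C_j\setminus C_j^{\mathcal{R}}$ by at most one displaced, low-ranked child of $j$, which enters both pools $Q\setminus\mathcal{R}_l$ and $Q\setminus D_i$. The only difference is cosmetic and lies in the last step: you close with submodularity of the top-$\mathcal{K}$ sum applied to $u_i=\mathcal{SW}_{-\mathcal{R}_l}-\mathcal{SW}_{-D_i}$, whereas the paper rewrites $u_i$ as $v_i^{get}$ minus the displaced values in $Q\setminus D_i$ and does an explicit case analysis on whether the number of units won by $i$'s subtree drops --- two equivalent expressions of the same monotonicity fact.
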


\begin{proof}
By Theorem 1, Proposition 1, Proposition 2, and Theorem 3, LDM-Tree is IR, non-wasteful, and achieves social welfare and revenue no less than the VCG mechanism in the first layer. The LDM has the same properties because it finally runs LDM-Tree on $\mathcal{T}^{BFS}(\hat{\theta})$. To prove LDM is IC, we only need to prove LDM-Tree satisfies Theorem \ref{t5}.

Assume there are two buyers $i,j \in \mathcal{L_{\emph{l}}}$. Let $C_j^{out} = C_j \setminus C_j^{\mathcal{R}}$. When $j$ has one more child $w$, $C_j^{out}$ will have one more buyer or remain the same.

Before $j$ has child $w$, assume $m = \pi^l_i + \sum_{q \in C_i} \pi^l_q$ and $v^{get}_i = v_i(\pi^l_i) + \sum_{q \in C_i} v_q(\pi^l_q)$. Let $v_S^1, \cdots, v_S^{\mathcal{K}}$ be the top $\mathcal{K}$ ranked values in set $S$. Then $u_i = \mathcal{SW}_{-\mathcal{R_{\emph{l}}}} - \mathcal{SW}_{-D_i} = v_i^{get} - \sum_{k = \mathcal{K} - m + 1}^{\mathcal{K}} v_{Q \setminus D_i}^k$.

After $j$ has child $w$, if $C_j^{out}$ remains the same, $u_i$ does not change. If $C_j^{out}$ has one more buyer,  $\sum_{k = \mathcal{K} - m + 1}^{\mathcal{K}} v_{Q \setminus D_i}^k$ does not decrease and $m$ may decrease or remain the same. If $m$ decreases,  $u_i$ will decrease because every value in $v_i^{get}$ is larger than any value in $\sum_{k = \mathcal{K} - m + 1}^{\mathcal{K}} v_{Q \setminus D_i}^k$. If $m$ remains the same, $v^{get}_i$ remains the same, $u_i$ will not increase.

We can see $u_i$ is non-increasing when $j$ has more children. Thus, LDM is incentive compatible. 
\end{proof}

\end{appendix}

\end{document}